\numberwithin{equation}{section}
\newcommand{\e}{\varepsilon}
\newcommand{\Pb}{\mathbb{P}}
\newcommand{\R}{\mathbb{R}}
\newcommand{\Z}{\mathbb{Z}}
\newcommand{\Or}{{\cal O}}
\newcommand{\GOE}{\mathrm{GOE}}
\newtheorem{prop}{Proposition}[section]
\newtheorem{thm}[prop]{Theorem}
\newtheorem{lem}[prop]{Lemma}
\newtheorem{cor}[prop]{Corollary}
\newtheorem{cla}[prop]{Claim}
\newtheorem{rem}[prop]{Remark}
\newenvironment{remark}{\begin{rem}\normalfont}{\end{rem}}
\title{Statistics of TASEP \\with three merging characteristics}
\author{Patrik L.\ Ferrari\thanks{Institute for Applied Mathematics, Bonn University, Endenicher Allee 60, 53115 Bonn, Germany. E-mail: {\tt ferrari@uni-bonn.de}} \and
Peter Nejjar\thanks{Institute for Applied Mathematics, Bonn University, Endenicher Allee 60, 53115 Bonn, Germany. E-mail: {\tt nejjar@uni-bonn.de}}
}
\date{\emph{We dedicate this paper to Joel Lebowitz\\ on the occasion of his 90th birthday.}}
\begin{document}
\maketitle

\sloppy
\begin{abstract}
In this paper we consider the totally asymmetric simple exclusion process, with non-random initial condition having three regions of constant densities of particles. From left to right, the densities of the three regions are increasing. Consequently, there are  three characteristics which meet, i.e.\ two shocks  merge. We study the particle fluctuations at this merging point and show that they are given by a product of three (properly scaled) GOE Tracy-Widom distribution functions. We work directly in TASEP  without relying on the connection to last passage percolation.
\end{abstract}

\section{Introduction and main result}
The totally asymmetric simple exclusion process (TASEP) is, arguably, the simplest \emph{non-reversible} interacting stochastic particle system~\cite{Li85b}. It belongs to the Kardar-Parisi-Zhang universality class~\cite{KPZ86}. A TASEP configuration is described by $\eta\in\{0,1\}^\Z$, where $\eta_j$ is the occupation variable at position $j$,  $\eta_j=1$ meaning that  there is a particle at site $j$. The stochastic updating rule is simple: independently of each other, particles try jump to the right with rate $1$ and are allowed to do so only if their right neighboring site is empty. We label particles and denote the position of particle $n$ at time $t$ by $x_n(t)$. Our choice is a right-to-left labeling, i.e., $x_{n+1}(0)<x_n(0)$ for all $n$, which is preserved at all times since particles can not overtake each other.

Under hydrodynamic scaling, the particle density solves the deterministic Burgers equation (see e.g.~\cite{Lig76,AV87}). It is well-known that depending on the initial conditions, discontinuities of the particle density can arise, or persist if already present initially. These discontinuities are called shocks and many properties of them are known. For random (Bernoulli) initial conditions one sees Gaussian fluctuations of the position of the shock~\cite{Fer90,FF94b,PG90} in the $t^{1/2}$ scale. Further microscopic information on the shock are available too~\cite{DJLS93,FKS91,Fer86,BS13}. The Gaussian fluctuations actually comes from the random initial conditions and not from the TASEP dynamics as the dynamical fluctuations generated by models in the KPZ class grow like $t^{1/3}$.

Fluctuations of a single shock for non-random initial conditions have been analyzed the first time in~\cite{FN13} and further investigated in~\cite{FN16,N17}, including the fluctuations of a second class particle~\cite{FGN17}. For large time $t$, the limiting distribution of particle positions around the shock is given by a product of two distribution functions. The reason is that a shock is located in a position where two characteristic lines of the Burgers equation intersect and space-time correlations are non-trivial for points in a $t^{2/3}$-neighborhood of a characteristic line. The $t^{2/3}$ is the typical size of correlation in KPZ growth models.

The shock between two particle densities $\lambda<\rho$ is stable in time and moves with average speed $1-\lambda-\rho$. If in the system there are many shocks they will merge. Coalescing of two shocks is typical, while higher order shock collisions are possible but atypical, because in that case at least three shock trajectories have to meet exactly at a given time. For that reason we consider in this paper the merging of two shocks. When two shocks merge, then three characteristic lines come together, see Figure~\ref{FigShockCollisions}. As stated in Theorem~\ref{ThmMain}, our main result is that the fluctuations of particles when two shocks merge is given as product of three instead of two GOE Tracy-Widom distribution functions.

In the previous papers~\cite{FN13,FN16,N17,FGN17} the results are obtained by using the mapping to the last passage percolation (LPP) model and then analyzing an equivalent problem in terms of LPP quantities. Merging of two shocks corresponds to a LPP from a domain given by the union of three line segments with different slopes. In the LPP approach, one would have to show that three maximizers do not intersect asymptotically.  The LPP is not more difficult than the present approach, but it fails to provide a direct understanding of the shock, and provides no ansatz on how to generalize to PASEP. The space-time picture we present for TASEP in the background of our proof should physically be the same also in models like PASEP or the KPZ equation, for which no correspondence to LPP exists.

This motivates the approach of this paper, which has already been proven crucial to obtain results beyond the totally asymmetric case, see~\cite{N19}. Hence in this paper we work directly in the space-time TASEP picture. As ingredients we use the convergence to a GOE Tracy-Widom distribution for constant density initial condition~\cite{FO17}, the slow-decorrelation phenomenon typical for KPZ models~\cite{Fer08,CFP10b} and show that the relevant randomness for the particle fluctuations at time $t$ are localized around space-time regions of width $t^{2/3+\e}$ around the characteristic lines. The proof that of the localization for space-time has some similarities, but also important differences, with respect to the proof of the localization of the maximizer path in LPP. For LPP, this was analyzed in the work on transversal fluctuations in~\cite{Jo00} and refined in~\cite{BSS14}.

 Of the aforementioned ingredients, in particular the convergence to GOE  in the partially asymmetric simple exclusion process is  not known, whereas slow decorrelation also holds for PASEP.  Furthermore, an important  property which is not satisfied by the PASEP is the  decomposition in several step initial data \eqref{eqMin}: in PASEP, the decomposition \eqref{eqMin} only provides an upper bound for the particle position. For a different shock, this issue - and how to overcome it - is described in detail in  Section 1.2 of \cite{N19}.

\subsubsection*{The main result}
Let us consider the following initial condition, given in terms of three fixed particle densities $0<\rho_1<\rho_2<\rho_3<1$:
\begin{equation}\label{eqIC}
x_n(0)=\left\{
\begin{array}{ll}
-\lfloor n/\rho_1\rfloor,& \textrm{for }n\geq 0,\\
-\lfloor n/\rho_2\rfloor,& \textrm{for }-1\leq n\leq -\lfloor T\rho_2\rfloor,\\
T-\lfloor (n+\lfloor T\rho_2\rfloor)/\rho_3\rfloor,& \textrm{for } n<-\lfloor T\rho_2\rfloor.
\end{array}
\right.
\end{equation}
This means that left from the origin we have density $\rho_1$ of particles, between $0$ and $T$ we have density $\rho_2$ and to the right of $T$ we have density $\rho_3$.

The chosen initial condition implies that there is a shock starting from the origin and moving with speed $1-\rho_1-\rho_2$ and a second shock starting from position $T$ having speed $1-\rho_2-\rho_3$. Thus there is a time where the two shock meet. We call the position and time when this happens the \emph{triple point}, since three characteristic lines of the Burgers equation meet. We consider $T\gg 1$. Then, an elementary computation shows that the triple point occurs around time $T/(\rho_3-\rho_1)$ at position $(1-\rho_1-\rho_2)T/(\rho_3-\rho_1)$. Furthermore, the particle at the triple point has label close to $\rho_1\rho_2 T/(\rho_3-\rho_1)$, see Figure~\ref{FigShockCollisions} for an illustration. Our main result is the following.
\begin{thm}\label{ThmMain}
Consider TASEP with initial condition \eqref{eqIC}. Take
\begin{equation}
N=\frac{\rho_1\rho_2 T}{\rho_3-\rho_1}+u T^{1/3},\quad t=\frac{T}{\rho_3-\rho_1}+\tau T^{1/3},\quad X=\frac{(1-\rho_1-\rho_2)T}{\rho_3-\rho_1}.
\end{equation}
Then
\begin{equation}
\lim_{T\to\infty} \Pb(x_N(t)\geq X-sT^{1/3}) = \prod_{k\in\{1,2,3\}}F_{\rm GOE}((s-\mu_k u+\nu_k\tau)/\sigma_k),
\end{equation}
where
\begin{equation}
\sigma_k = \frac{1}{(\rho_3-\rho_1)^{1/3}}\frac{2^{2/3}\rho_k^{1/3}}{(1-\rho_k)^{2/3}},\quad \mu_k = 1/\rho_k,\quad \nu_k=1-\rho_k,
\end{equation}
for $k=1,2,3$. Here $F_{\rm GOE}$ is the GOE Tracy-Widom distribution function~\cite{TW96}.
\end{thm}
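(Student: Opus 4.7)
The plan is to combine a min-type coupling of three constant-density TASEPs with the one-point GOE convergence of \cite{FO17} and a slow-decorrelation plus space-time-localization argument to obtain asymptotic independence of the three resulting random variables.

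\textbf{Coupling and marginals.} First, I extend each of the three segments of \eqref{eqIC} to a pure density-$\rho_k$ flat initial condition $x_n^{(k)}(0)$ on all of $\Z$, chosen so that $x_n(0)=\min_k x_n^{(k)}(0)$; this is precisely the decomposition \eqref{eqMin} cited in the introduction. Running the three resulting flat TASEPs under the same graphical construction preserves the min at all times, so
\begin{equation*}
\Pb(x_N(t)\geq X-sT^{1/3})=\Pb\Bigl(\bigcap_{k=1}^{3}\{x_N^{(k)}(t)\geq X-sT^{1/3}\}\Bigr).
\end{equation*}
A short hydrodynamic computation using the scalings of Theorem~\ref{ThmMain} shows that the leading (macroscopic) position of particle $N$ at time $t$ in the $k$-th flat system equals $X-(\mu_k u-\nu_k\tau)T^{1/3}$. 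Applying the flat-TASEP GOE theorem of \cite{FO17} to each marginal (with KPZ scale $\sigma_k$) therefore gives $\Pb(x_N^{(k)}(t)\geq X-sT^{1/3})\to F_{\rm GOE}((s-\mu_k u+\nu_k\tau)/\sigma_k)$.

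\textbf{Asymptotic independence.} The central step is to upgrade this marginal convergence to a joint one. Fix $\alpha\in(2/3,1)$. Slow decorrelation \cite{Fer08,CFP10b} lets me replace $x_N^{(k)}(t)$ by $x_{N_k'}^{(k)}(t-T^\alpha)$, where $(N_k',t-T^\alpha)$ is obtained by tracking back along the $k$-th characteristic (which has speed $1-2\rho_k$), up to a deterministic shift and an error that vanishes on the $T^{1/3}$ scale. Next, I prove a space-time localization statement: up to an $o(T^{1/3})$ error, the back-tracked position $x_{N_k'}^{(k)}(t-T^\alpha)$ depends only on the Poisson clocks and initial data inside a tube of width $T^{2/3+\e}$ around the $k$-th characteristic. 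At time $t-T^\alpha$ the three back-tracked reference points are spatially separated by $\Theta(T^\alpha)\gg T^{2/3+\e}$, so the three tubes are disjoint, the three truncated random variables use disjoint blocks of Poisson clocks and are therefore \emph{exactly} independent, and the joint probability factorizes into $\prod_k F_{\rm GOE}((s-\mu_k u+\nu_k\tau)/\sigma_k)$.

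\textbf{The hard part.} The main obstacle is the TASEP-level localization estimate. For LPP the analogue is the transversal-fluctuation bound of \cite{Jo00,BSS14}, but here the point is to work directly on the particle level, as emphasized in the introduction. Concretely, I would compare $x_{N_k'}^{(k)}(t-T^\alpha)$ to a truncated TASEP in which all Poisson clocks outside the tube have been resampled, and bound the resulting difference in the $T^{1/3}$ scale using one-point upper- and lower-tail estimates from \cite{FO17} applied to suitably truncated flat subsystems, essentially showing that the influence of Poisson clocks outside a $T^{2/3+\e}$-tube around the characteristic reaches the back-tracked observation point only with vanishing probability. This estimate must be produced uniformly in the three constant densities, and I expect it to dominate the technical length of the paper; once it is in hand, the rest of the argument is essentially hydrodynamics combined with the GOE input of \cite{FO17}.
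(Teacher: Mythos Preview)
Your strategy---decompose $x_N(t)$ as a minimum over three flat subsystems, compute each marginal via \cite{FO17}, then use slow decorrelation and $T^{2/3+\e}$-localization to decouple them---is exactly the paper's route. Two points of comparison are worth recording. (i) You decompose directly into three \emph{full-line} flat TASEPs; the paper first decomposes (Proposition~\ref{propMinimum}) into three \emph{partial} systems $x^k$ (half-flat, step-plus-segment, step-plus-half-flat) for which the min identity is immediate from the Sepp\"al\"ainen representation \eqref{eqMin}, and only afterwards passes to the full-line systems $\tilde x^k$ via localization. Your shortcut does work, because \eqref{eqMin} commutes with pointwise minima of initial data, but ``the graphical construction preserves the min at all times'' is not a generic consequence of attractiveness and needs precisely that justification. (ii) For localization you propose comparing with a truncated TASEP whose clocks outside the tube are resampled and bounding the discrepancy; the paper's device is instead the \emph{backwards path} $x_{N(u)}(u)$ of \cite{Fer18}: on the event that this path stays inside the cylinder, $x_N(t)$ is \emph{pathwise determined} by the Poisson clocks inside it (Remark~\ref{remarkLoc}), so there is no truncation error to control---one only has to bound the exit probability (Theorem~\ref{thmLocalization}), which is done via step-initial-data tail bounds (Lemma~\ref{lemBound2}, Proposition~\ref{PropPtLineBounds}) rather than flat bounds from \cite{FO17}. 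The backwards-path formulation is what turns your ``hard part'' into a clean exit-probability estimate.
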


\begin{figure}
\begin{center}
\psfrag{x}[lc]{\textrm{Space}}
\psfrag{t}[lc]{\textrm{Time}}
\psfrag{0}[lc]{$0$}
\psfrag{T}[cc]{$T$}
\psfrag{L1}[l]{${\cal L}_1$}
\psfrag{L2}[l]{${\cal L}_2$}
\psfrag{L3}[l]{${\cal L}_3$}
\includegraphics[height=5cm]{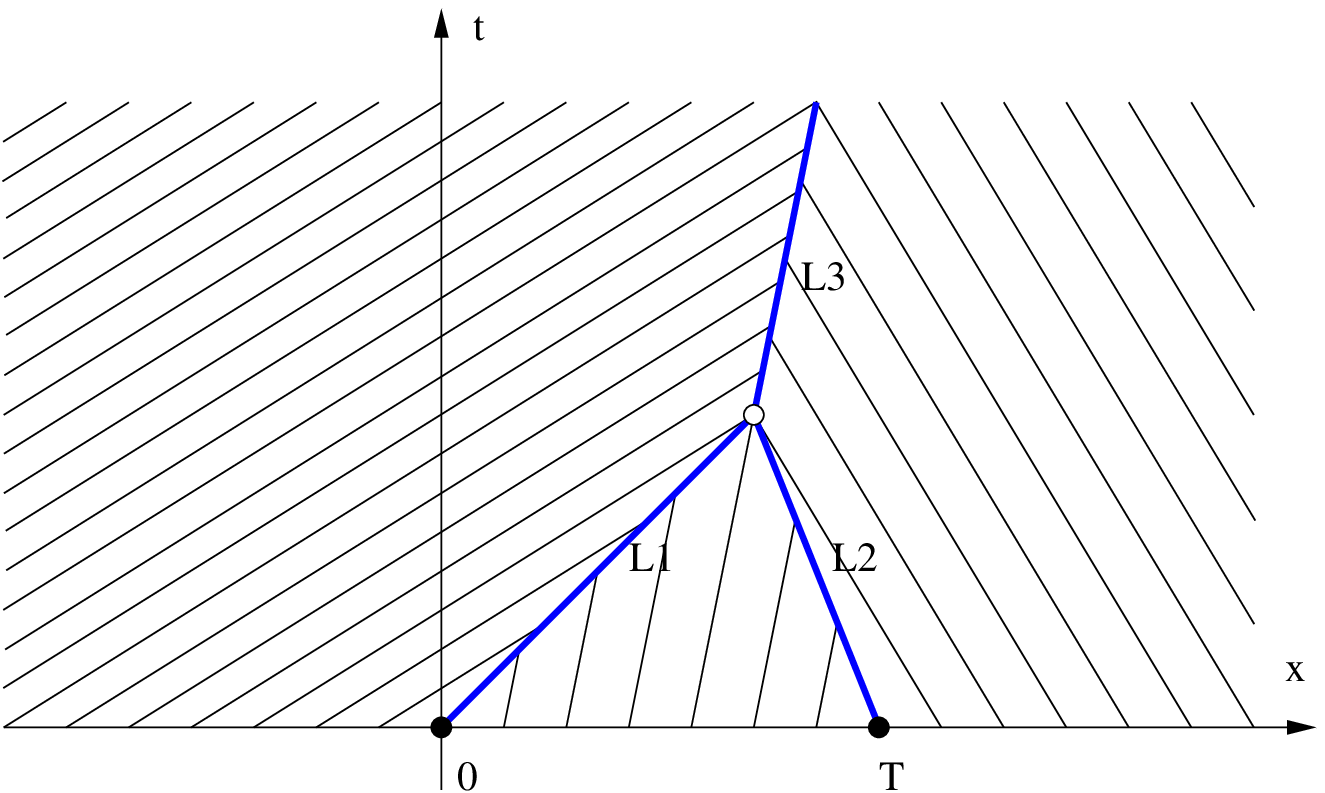}
\caption{Illustration of the space-time characteristics for densities $\rho_1=0.1$, $\rho_2=0.4$ and $\rho_3=0.8$. The black lines are characteristic lines in the system. The blue thick lines  $\mathcal{L}_1,\mathcal{L}_2, \mathcal{L}_3$ are the three standard shocks, while the white dot is the triple point. Along any of the black characteristics, the fluctuations are given by a single $\GOE$ distribution. On each of the lines $\mathcal{L}_i,i=1,2,3$ the fluctuations are given by a product of two $\GOE$-distributions. At the triple point, the fluctuations are given by a product of three $\GOE$ distributions.}
\label{FigShockCollisions}
\end{center}
\end{figure}

If we start from the triple point and take the parameter $\tau\to\pm\infty$ following the directions of the standard shocks (the blue lines in Figure~\ref{FigShockCollisions}), then the fluctuations reduces to the ones of a standard shock, i.e., they are given in our case as product of two GOE Tracy-Widom distributions (appropriately centered and scaled) as stated below.

Physically, when $\tau\ll -1$, we are in the situation where the two normal shocks have not yet met and thus their distribution is given as a product of two $F_{\rm GOE}$ distributions: Recalling the line segments from Figure~\ref{FigShockCollisions}, the situation in (a) below corresponds to look at  $\mathcal{L}_{1}$, (b) below to look at $\mathcal{L}_{2}.$   When $\tau\gg 1$, the two shocks have merged into a single one and thus also in this case the distribution is a product of two $F_{\rm GOE}$. This corresponds to look at the line segment $\mathcal{L}_{3}$ and is done in (c) below.
\begin{cor}\label{CorStandardShocks}
We have the following transitions to standard shock behavior:
\begin{itemize}
\item[(a)] Let us replace $u\to u+\tau\rho_1\rho_2$, $s\to s-(1-\rho_1-\rho_2)\tau$. Then
\begin{equation}
\lim_{\tau\to-\infty} \prod_{k\in\{1,2,3\}}F_{\rm GOE}((s-\mu_k u+\nu_k\tau)/\sigma_k) = \prod_{k\in\{1,2\}} F_{\rm GOE}((s-\mu_k u)/\sigma_k).
\end{equation}
\item[(b)] Let us replace  $u\to u+\tau\rho_2\rho_3$, $s\to s-(1-\rho_2-\rho_3)\tau$. Then
\begin{equation}
\lim_{\tau\to-\infty} \prod_{k\in\{1,2,3\}}F_{\rm GOE}((s-\mu_k u+\nu_k\tau)/\sigma_k) = \prod_{k\in\{2,3\}} F_{\rm GOE}((s-\mu_k u)/\sigma_k).
\end{equation}
\item[(c)] Let us replace  $u\to u+\tau\rho_1\rho_3$, $s\to s-(1-\rho_1-\rho_3)\tau$. Then
\begin{equation}
\lim_{\tau\to\infty} \prod_{k\in\{1,2,3\}}F_{\rm GOE}((s-\mu_k u+\nu_k\tau)/\sigma_k) = \prod_{k\in\{1,3\}} F_{\rm GOE}((s-\mu_k u)/\sigma_k).
\end{equation}
\end{itemize}
\end{cor}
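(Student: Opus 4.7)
The plan is to observe that all three parts reduce to an elementary sign-check: under each rescaling one of the three arguments of $F_{\rm GOE}$ tends to $+\infty$ (so that factor tends to $1$), while the other two become independent of $\tau$ and reproduce the right-hand side verbatim.

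Concretely, plugging the replacement $u \mapsto u + \tau \rho_i \rho_j$ and $s \mapsto s - (1-\rho_i-\rho_j)\tau$ into the $k$-th factor, and using $\mu_k = 1/\rho_k$, $\nu_k = 1-\rho_k$, the argument of the $k$-th $F_{\rm GOE}$ becomes
\begin{equation*}
\frac{1}{\sigma_k}\Bigl(s - \mu_k u + \tau\,\phi_k\Bigr), \qquad \phi_k := \rho_i + \rho_j - \rho_k - \frac{\rho_i \rho_j}{\rho_k}.
\end{equation*}
A one-line expansion yields the factorization $\phi_k = -(\rho_k-\rho_i)(\rho_k-\rho_j)/\rho_k$, so the coefficient of $\tau$ vanishes precisely when $k \in \{i,j\}$; for those two values of $k$ the argument is independent of $\tau$ and reads $(s-\mu_k u)/\sigma_k$, matching exactly the factors on the right-hand side.

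For the remaining index, call it $k^{\ast}$, I would simply inspect the sign of $\phi_{k^{\ast}}$ using $\rho_1 < \rho_2 < \rho_3$. In part (a) we have $\{i,j\}=\{1,2\}$ and $k^{\ast}=3$, so $\rho_{k^{\ast}}$ lies outside $[\rho_i,\rho_j]$ and $\phi_{k^{\ast}}<0$; with $\tau\to-\infty$ this makes $\tau\,\phi_{k^{\ast}}\to +\infty$. In part (b) we have $k^{\ast}=1$, again outside $[\rho_i,\rho_j]$, so $\phi_{k^{\ast}}<0$ and the sign $\tau\to-\infty$ again forces the argument to $+\infty$. In part (c) we have $k^{\ast}=2$, which lies strictly between $\rho_1$ and $\rho_3$, so $\phi_{k^{\ast}}>0$ and the prescribed limit $\tau\to+\infty$ once more sends the argument to $+\infty$. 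In every case the $k^{\ast}$-th factor tends to $1$ by $F_{\rm GOE}(+\infty)=1$, leaving precisely the two-factor product stated.

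There is no real obstacle: the corollary is a purely algebraic consequence of Theorem~\ref{ThmMain}, using only the formulas for $\mu_k,\nu_k$ and continuity of $F_{\rm GOE}$; no quantitative control on the convergence in Theorem~\ref{ThmMain} is needed. The only computation worth displaying is the factorization of $\phi_k$, which handles all three parts uniformly, after which the case analysis is a two-line sign check.
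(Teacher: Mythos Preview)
Your proposal is correct and follows essentially the same approach as the paper: perform the substitution, observe that two of the three arguments become independent of $\tau$ while the third tends to $+\infty$, and use $F_{\rm GOE}(+\infty)=1$. Your uniform treatment via the factorization $\phi_k=-(\rho_k-\rho_i)(\rho_k-\rho_j)/\rho_k$ is a tidy way to handle all three cases at once, whereas the paper simply writes out case~(a) explicitly (obtaining the coefficient $-\tfrac{1}{\rho_3}(\rho_3-\rho_2)(\rho_3-\rho_1)$, which is your $\phi_3$) and declares the others analogous.
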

\begin{proof}
The proof consists in elementary computations. For instance, in the first case, after shifting of $u$ and $s$ leads to
\begin{equation}
\begin{aligned}
s-\mu_1 u+\nu_1 \tau &\to s-\mu_1 u,\quad s-\mu_2 u+\nu_2 \tau \to s-\mu_2 u, \\
s-\mu_3 u+\nu_3 \tau &\to s-\mu_3 u-\frac{\tau}{\rho_3}(\rho_3-\rho_2)(\rho_3-\rho_1)\to \infty
\end{aligned}
\end{equation}
as $\tau\to -\infty$ and $F_{\rm GOE}(\infty)=1$.
\end{proof}

\paragraph{Outline.} In Section~\ref{sectFixedDensity} we analyze in details TASEP with fixed density $\rho$ over all $\Z$. We state the one-point distribution, the slow decorrelation and the localization result, which is the main work. These are then used as input in Section~\ref{SectProof} to prove the main theorem.

\paragraph{Acknowledgements.} This work is supported  by the Deutsche Forschungsgemeinschaft (German Research Foundation) by  the CRC 1060  (Projektnummer 211504053) and  Germany's Excellence Strategy - GZ 2047/1, Projekt ID 390685813.

\section{TASEP with fixed density}\label{sectFixedDensity}
In order to prove Theorem~\ref{ThmMain}, we start by studying the localization property of TASEP with constant density everywhere, i.e., for some fixed density $\rho\in (0,1)$, consider the initial condition
\begin{equation}
x_n(0)=-\lfloor n/\rho\rfloor,\quad n\in\Z.
\end{equation}

The characteristic lines have speed $1-2\rho$, while particles in average move with speed $1-\rho$. We want to consider particles which are non-trivially correlated with the event around the origin at time $0$. Thus we get\footnote{We will not write explicitly the integer valued in the following, since they are irrelevant for the asymptotic question we are going to study.}
\begin{equation}
  N=\lfloor \rho^2 t\rfloor,
\end{equation}
and, at time $t$,
\begin{equation}
x_N(t)\simeq (1-2\rho)t.
\end{equation}

The fluctuations of $x_n(t)$ are known to be GOE Tracy-Widom distributed. Indeed, by rewriting the result proven in~\cite{FO17} in terms of distribution of particles instead of height function or last passage percolation we have the following.
\begin{prop}[Theorem 2.1 in ~\cite{FO17}]\label{PropCvgDistr}
Consider TASEP with initial condition $x_n(0)=-\lfloor n/\rho\rfloor$ for some fixed $\rho\in (0,1)$. Then
\begin{equation}
\lim_{t\to\infty} \Pb(x_{\rho^2 t+\rho\alpha t}(t)\geq (1-2\rho-\alpha) t - s t^{1/3})=F_{\rm GOE}(s/\sigma(\rho)),
\end{equation}
with $\sigma(\rho)=2^{2/3}\rho^{1/3}/(1-\rho)^{2/3}$.
\end{prop}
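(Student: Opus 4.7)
The plan is to derive Proposition~\ref{PropCvgDistr} as a direct change-of-variable consequence of the height-function result~\cite{FO17}, which gives the GOE Tracy-Widom limit for the TASEP height function $h(j,t)$ started from density-$\rho$ flat initial data. No new probabilistic argument is required; the content is the translation from height-function language to labelled-particle language.

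First, I would use the standard TASEP identity relating particle positions to the height function: in the right-to-left labelling convention used here, the event $\{x_N(t)\geq y\}$ is equivalent to a lower bound on $h(y,t)$ of the affine form $h(y,t)\geq 2N+y-C$, where $C$ is a deterministic constant fixed by the initial data. Under the density-$\rho$ initial condition, the macroscopic shape of $h(y,t)$ around the characteristic direction $j=(1-2\rho)t$, as well as its KPZ-type scaling, is supplied by~\cite{FO17}.

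Second, I would substitute $N=\rho^2 t+\rho\alpha t$ and $y=(1-2\rho-\alpha)t-s t^{1/3}$ into this inequality. The shift $\rho\alpha t$ in $N$ together with the $\alpha t$ shift in $y$ is arranged precisely so that the event falls inside the $t^{2/3}$ correlation window around the characteristic line, with $\alpha$ playing the role of the transversal parameter (up to a $\rho$-dependent affine reparametrization) and $s$ the longitudinal one. Applying Theorem~2.1 of~\cite{FO17} then yields a $F_{\rm GOE}$ limit in $s$.

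Third, I would match scaling constants to obtain the stated $\sigma(\rho)=2^{2/3}\rho^{1/3}/(1-\rho)^{2/3}$. This factor comes from dividing the height-function KPZ scaling constant by $\rho$ (to convert a fluctuation of $h$ into one of the particle position, since $h$ changes by two for each passing particle while the particle label changes by one) together with a factor $(1-\rho)^{-2/3}$ produced by projecting the fluctuation along the particle trajectory rather than along the $j$-axis. The only real obstacle is this bookkeeping: keeping the sign conventions for $h$ and for the labelling consistent and tracking factors of $\rho$ and $1-\rho$ correctly. The deep analytic input is entirely in~\cite{FO17}.
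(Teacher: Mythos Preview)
Your proposal is correct and matches exactly what the paper does: the paper gives no proof of this proposition but simply states it as the rewriting of Theorem~2.1 of~\cite{FO17} from height-function (or LPP) variables into particle-position variables. Your outline of the height-to-particle dictionary and the bookkeeping of the scaling constant $\sigma(\rho)$ is precisely that translation.
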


\subsubsection*{The backwards path}
We define the following process running \emph{backwards} in time as in~\cite{Fer18}, see Figure~\ref{FigSpaceCut} for an illustration. Let $N(t)=N$. If at time $\hat t^+$ we have $N(\hat t^+)=n$ and at time $\hat t$ a jump of particle $n$ is suppressed by the presence of particle $n-1$, then we set $N(\hat t)=n-1$. Further, for any $u \in (0,t)$, we define the particle system $\{\hat x_n(s),u\leq s\leq t, n\geq N(u)\}$ by setting the position of particles at $s=u$ as $\hat x_n(u)=x_{N(u)}(u)-n+N(u)$, for $n\geq N(u)$. The evolution of $\hat x_n$'s and $x_n$ are coupled by the basic coupling.
\begin{figure}[t!]
\begin{center}
\psfrag{x}[lc]{$x$}
\psfrag{t}[lb]{$t$}
\psfrag{tau}[lb]{$u$}
\includegraphics[height=5cm]{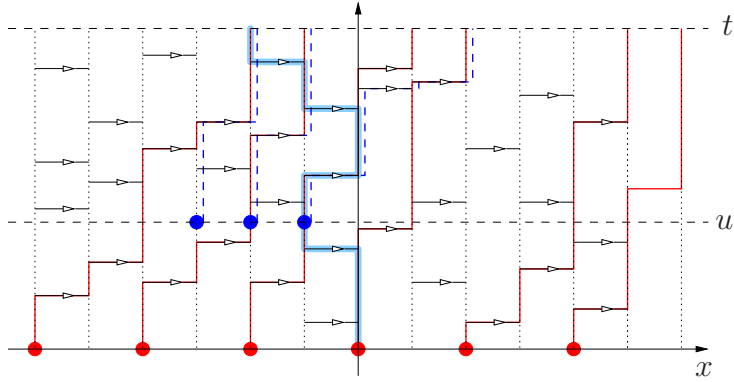}
\caption{The red solid lines are the trajectories of $\{x_n(s),0\leq s\leq t, \textrm{ all }n\}$. The thick light-blue line is the trajectory of $\{x_{N(s)}(s),0\leq s\leq t\}$. The dashed blue lines are the trajectories of $\{\hat x_{n}(s), u\leq s\leq t,n\geq N(u)\}$.}
\label{FigSpaceCut}
\end{center}
\end{figure}
We define for $Z \in \Z$ the step initial data
\begin{equation}
x_{n}^{{\rm step},Z }=-n+Z+1, n\geq 1,
\end{equation}
for $Z=0$ we simply write $x_{n}^{\rm{step} }$.
In Proposition~3.4 of~\cite{Fer18}  it is shown that $x_N(t)=\hat x_N(t)$, i.e.,
\begin{equation}\label{eq2.6}
x_N(t)=x_{N(\tau)}(\tau)+y_{N(\tau)}(\tau,t),\quad y_{N(\tau)}(\tau,t)\stackrel{(d)}{=}x^{\rm step}_{N-N(\tau)+1}(t-\tau),
\end{equation}
Here, the process $y_{N(\tau)}(\tau,t)$ from time $\tau$ to time $t$ is obtained by starting \textit{at time} $\tau$ a TASEP from $\{x_{n}^{{\rm step},Z}\}_{n\geq 1}$ with $Z=x_{N(\tau)}(\tau)$
and setting $ y_{N(\tau)}(\tau,t)=x_{N-N(\tau)+1}^{{\rm step},Z}(t)-Z.$
Furthermore, for any other $k\leq N$ it holds
\begin{equation}\label{eq1.5}
x_N(t)\leq x_{k}(\tau)+y_{k}(\tau,t),\quad y_{k}(\tau,t)\stackrel{(d)}{=}x^{\rm step}_{N-k+1}(t-\tau).
\end{equation}
In other words, we can write
\begin{equation}\label{eqMin}
x_N(t)=\inf_{k\leq N} x^{{\rm step},x_{k}(0)}_{N-k+1}(t) = x^{{\rm step},x_{N(0)}(0)}_{N-N(0)+1}(t).
\end{equation}
This was also previously shown in Lemma~2.1 of~\cite{Sep98c}.

\begin{remark}\label{remarkLoc}
An important property of the construction of the backwards paths is the following (see~\cite{Fer18} for more detailed explanations). Suppose that the backwards path $x_{N(u)}(u)$, $u\in[0,t]$, stays inside a deterministic region $\cal C$. Then, the distribution of $x_{N(t)}(t)$ does not depend on the particular realization of the randomness outside $\cal C$. Thus, if the backwards path stays inside $\cal C$ with probability going to one as $t\to\infty$, then the law of $x_{N(t)}(t)$ is asymptotically independent of the randomness outside $\cal C$.
\end{remark}

From a law of large number and the KPZ scaling exponents, we expect
\begin{equation}\label{eq2.9}
  N(\tau)=\rho^2 \tau + \Or(t^{2/3}),\quad x_{N(\tau)}=(1-2\rho)\tau + \Or(t^{2/3})
\end{equation}
as we shall prove below.

\subsubsection*{Localization}
Consider the cylinder $\cal C$ around the characteristic line leaving from the origin of width $t^{2/3+\e}$, namely
\begin{equation}
{\cal C}=\{(x,u)\in \R\times [0,t]\, |\, |x-(1-2\rho)u|< t^{2/3+\e}\}.
\end{equation}
Define its left/right border as
\begin{equation}
{\cal L}_\pm = \{(x,u)\in \R\times [0,t]\, |\, x=(1-2\rho)u\pm t^{2/3+\e}\}.
\end{equation}
Consider the events
\begin{equation}
\Omega_\e=\{\exists u\in [0,t] \, | \, (x_{N(u)}(u),u)\not\in{\cal C}\},\quad G=\{(x_N(t),t)\in {\cal C}\}.
\end{equation}

\begin{thm}[Localization]\label{thmLocalization}
Let $0<\e<1/3$. Then, for all $t$ large enough,
\begin{equation}
\Pb(\Omega_\e)\leq C e^{-c t^\e},
\end{equation}
for some constants $C,c>0$.
\end{thm}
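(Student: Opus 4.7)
The approach is to apply the minimum identity \eqref{eqMin} at an intermediate time $u\in(0,t)$, rewrite the event that the backwards path exits $\mathcal{C}$ as the argmin in this identity being atypical, and exploit the strict convexity of its deterministic counterpart. A union bound over a grid of times (e.g.\ the integer times $u_i \in \{0,1,\ldots,\lfloor t\rfloor\}$) together with a Poisson estimate controlling the oscillation of the backwards path inside each unit interval then yields the claimed stretched exponential decay.

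By the Markov property at time $u$ combined with \eqref{eq2.6} and \eqref{eq1.5},
\begin{equation*}
  x_N(t) = \min_{k \leq N} \varphi(k), \qquad \varphi(k) := x_k(u) + y_k(u,t),
\end{equation*}
with $y_k(u,t) \stackrel{d}{=} x_{N-k+1}^{\rm step}(t-u)$ independent of the time-$u$ configuration, and the minimum attained at $k = N(u)$. Its deterministic counterpart $\bar\varphi(k) := -k/\rho + (1-\rho)u + (t-u) - 2\sqrt{(N-k+1)(t-u)}$ is strictly convex, minimized at $k^\ast \simeq \rho^2 u$ with value $(1-2\rho)t$, and satisfies $\bar\varphi(k) - \bar\varphi(k^\ast) \geq c_0(\rho)(k-k^\ast)^2/(t-u)$ for some $c_0(\rho) > 0$.

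Setting $r := \rho\, t^{2/3+\e}$, for $|k-k^\ast| > r$ and $u$ bounded away from $t$ the quadratic lower bound gives $\bar\varphi(k) - \bar\varphi(k^\ast) \geq c_1(\rho)\, t^{1/3+2\e}$, and hence
\begin{equation*}
\begin{split}
\Pb(|N(u)-k^\ast| > r) & \leq \Pb\bigl(\varphi(k^\ast) > \bar\varphi(k^\ast) + \tfrac{c_1}{2}\, t^{1/3+2\e}\bigr) \\
& \quad + \sum_{|k-k^\ast|>r} \Pb\bigl(\varphi(k) < \bar\varphi(k) - \tfrac{c_1}{2}\, t^{1/3+2\e}\bigr).
\end{split}
\end{equation*}
Both probabilities are large deviations at scale $t^{2\e}$ in units of the natural $t^{1/3}$ fluctuation scale of $\varphi$. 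Using super-polynomial one-point tail bounds for step-IC LPP (lower tail $\leq C e^{-c\lambda^3}$, upper tail $\leq C e^{-c\lambda^{3/2}}$) and transferring these to the constant-density factor $x_k(u)$ via \eqref{eqMin} at time $0$, the first term is $\leq C e^{-c t^{3\e}}$ and each summand of the second is $\leq C e^{-c t^{6\e}}$; the $O(t)$-many summands are absorbed into the exponential. Converting the label estimate into the spatial statement via the same tail bounds applied to the bulk particle $x_{k^\ast \pm r/2}(u)$, and unioning over the $O(t)$ grid points (with a symmetric argument for $u$ close to $t$), one obtains $\Pb(\Omega_\e) \leq C' e^{-c' t^\e}$.

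I expect the main obstacle to be establishing the stretched exponential one-point tail bounds for the constant-density TASEP, since Proposition~\ref{PropCvgDistr} only provides distributional convergence. The cleanest route is to rewrite $x_k(u)$ via \eqref{eqMin} at time $0$ as a minimum of step-IC one-point quantities and invoke the classical LPP tail bounds (upper tail directly from a single term, lower tail by a union over all terms), obtaining super-polynomial rates at the $t^{1/3}$ scale. A secondary complication is uniformity of the curvature constant as $u \to t$ (where both $k^\ast \to N$ and $t-u \to 0$), which can be handled either by a symmetry argument $u \leftrightarrow t-u$ or by an ad hoc estimate near the endpoint.
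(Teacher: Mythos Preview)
Your strategy is correct and uses the same core ingredients as the paper: the identity $x_N(t)=\min_k\varphi(k)$ with argmin at $N(u)$, the curvature of the deterministic profile $\bar\varphi$, step--IC tail bounds transferred to the flat case via \eqref{eqMin} (this is precisely Proposition~\ref{PropPtLineBounds}), and a union over integer times with a Poisson oscillation estimate on unit intervals.

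The organization differs from the paper in one respect. The paper argues \emph{position-first}: it assumes the backwards path touches ${\cal L}_\pm$ at some time, so $x_{N(\tau)}(\tau)$ is pinned at the boundary; Lemmas~\ref{lem2.4}--\ref{lem2.5} then pin the label $N(\tau)$, and Proposition~\ref{propBoundStep} shows the remaining step piece $y_{N(\tau)}(\tau,t)$ is so large that $x_N(t)\geq (1-2\rho)t+\tfrac18 t^{1/3+2\e}$, contradicting the upper tail of $x_N(t)$ (event $F$). You argue \emph{label-first}: you bound $\Pb(|N(u)-k^\ast|>r)$ directly via your two-term split, and then convert to the spatial statement afterwards. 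Both are valid; the paper's route has the minor advantage that the label-to-position conversion is built into the boundary-crossing hypothesis and needs no extra slack, whereas in your route you have to be a bit careful with constants when passing from $|N(u)-k^\ast|\leq r$ to $|x_{N(u)}(u)-(1-2\rho)u|<t^{2/3+\e}$ (your use of $r/2$ handles this). Your route has the advantage that the curvature argument is more transparent as a single convexity statement rather than split across two auxiliary results.

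Two small remarks. First, the paper only uses (and proves, via Lemma~\ref{lemBound2}) the weaker lower-tail rate $e^{-c\lambda}$ for step IC, not $e^{-c\lambda^3}$; this still suffices for your sum since each summand is $\leq Ce^{-ct^{2\e}}$ and the polynomial number of terms is absorbed. Second, your sum over $k$ is over an infinite range (all $k\leq N$), not $O(t)$ terms; summability for very negative $k$ follows because $\bar\varphi(k)-\bar\varphi(k^\ast)$ grows at least linearly there, exactly as in Cases~1--3 of the proof of Proposition~\ref{PropPtLineBounds}.
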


To prove Theorem~\ref{thmLocalization}, we need to have a good control on the fluctuations of $x_N(t)$.
\begin{prop}\label{PropPtLineBounds}
Let $\alpha\in\R$ be fixed and consider $N=\rho^2 t+\rho\alpha t$. Let \mbox{$\delta\in (0,1/3)$}. Then, for all $t$ large enough,
\begin{equation}
\begin{aligned}
\Pb(x_N(t)\geq (1-2\rho-\alpha) t+t^{1/3+\delta}) &\leq C e^{-c t^{3\delta/2}},\\
\Pb(x_N(t)\leq (1-2\rho-\alpha) t-t^{1/3+\delta}) &\leq \widetilde C e^{-\tilde c t^{\delta}},
\end{aligned}
\end{equation}
for some constants $C,\widetilde C,c,\tilde c$.
\end{prop}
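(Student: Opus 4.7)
The plan is to reduce both tails to one-point estimates for step initial-condition TASEP --- which are classical through the LPP correspondence --- by exploiting the representation \eqref{eqMin} that expresses $x_N(t)$ as an infimum of step-IC contributions anchored at the initial sites $\{x_k(0)\}_{k\le N}$. The upper and lower tails then split naturally: a single term of the infimum suffices for the upper tail, whereas a union bound over $k$ is required for the lower tail.

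For the upper bound, I would apply \eqref{eq1.5} with the deterministic choice $k^{*}=\lfloor \rho\alpha t\rfloor$, i.e.\ the label of the initial particle lying (approximately) on the backward characteristic through $\bigl((1-2\rho-\alpha)t,t\bigr)$. This gives
\begin{equation*}
x_N(t)\,\le\, x_{k^{*}}(0)+y_{k^{*}}(0,t),\qquad y_{k^{*}}(0,t)\stackrel{(d)}{=} x^{\rm step}_{N-k^{*}+1}(t),
\end{equation*}
with $N-k^{*}+1=\rho^{2}t+O(1)$ and $x_{k^{*}}(0)=-\alpha t+O(1)$, so that the mean of the right-hand side equals $(1-2\rho-\alpha)t+O(1)$. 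The one-point upper tail for step-IC particle position (equivalently, the lower tail for the associated LPP time) with deviation $t^{1/3+\delta}$ then delivers the required bound $e^{-ct^{3\delta/2}}$.

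For the lower bound, I would instead use the full infimum in \eqref{eqMin} together with a union bound over $k\le N$. Writing $j=k-k^{*}$ and Taylor-expanding, the curvature of the KPZ mean-curve gives
\begin{equation*}
\E\bigl[x_k(0)+y_k(0,t)\bigr]=(1-2\rho-\alpha)t+\frac{j^{2}}{4\rho^{3}t}+O\!\left(\frac{j^{3}}{t^{2}}\right),
\end{equation*}
so each bad event requires a downward deviation from its own mean of size at least $t^{1/3+\delta}+j^{2}/(4\rho^{3}t)$. The classical one-point lower tail for step-IC particle position (upper tail of LPP) bounds each summand by $\exp\!\bigl(-c(t^{\delta}+j^{2}/t^{4/3})^{3/2}\bigr)$, which factorises as $\le e^{-ct^{3\delta/2}}\cdot e^{-c|j|^{3}/t^{2}}$ by super-additivity of $x\mapsto x^{3/2}$. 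The sum $\sum_{j}e^{-c|j|^{3}/t^{2}}$ is $O(t^{2/3})$ and is absorbed into the exponential prefactor; very extreme indices with $N-k+1\gtrsim t$ can be treated separately, since in that regime the step-IC particle cannot reach the relevant window at all and those terms contribute $0$. The overall bound obtained is in fact $e^{-c' t^{3\delta/2}}$, which is stronger than the claimed $e^{-\tilde c t^{\delta}}$.

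The main obstacle will be securing genuinely \emph{uniform-in-$t$} step-IC one-point tail estimates, i.e.\ concentration inequalities rather than the mere distributional limit in Proposition~\ref{PropCvgDistr}. These are standard via Johansson-type large-deviation bounds for exponential/geometric LPP carried over to TASEP through the standard correspondence, but one should verify uniformity of the constants across the range of step-IC indices $N-k+1$ appearing in the union bound and, separately, treat the degenerate large-$|j|$ regime where step-IC scaling breaks down.
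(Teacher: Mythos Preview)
Your proposal is correct and follows the same strategy as the paper's proof: a single term of the infimum (basic coupling with step initial condition) for the upper tail, and a union bound over $k\le N$ combined with the quadratic curvature of the step-IC mean curve for the lower tail, together with a separate treatment of degenerate indices where $\nu=(N-k+1)/t$ leaves any compact subset of $(0,1)$. The paper first reduces to $\alpha=0$ by translation invariance (your choice $k^{*}=\lfloor\rho\alpha t\rfloor$ amounts to the same thing) and then splits the lower-tail sum into three explicit regimes that match your ``main range / index near zero / index $\gtrsim t$'' trichotomy.

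The one substantive difference is the step-IC input you invoke for the lower tail of $x^{\rm step}$: you use the optimal bound $\Pb\bigl(x^{\rm step}_{\nu t}(t)\le (1-2\sqrt\nu)t-st^{1/3}\bigr)\le e^{-cs^{3/2}}$ (upper tail of LPP), whereas the paper only appeals to the cruder $e^{-c_2 s}$ stated in Lemma~\ref{lemBound2}, while acknowledging in the appendix that the sharper decay is available~\cite{LR09}. This is why you obtain $e^{-c't^{3\delta/2}}$ for the second inequality rather than the paper's $e^{-\tilde c t^{\delta}}$; either suffices for the downstream localization argument. If you want to match the paper literally, just replace the $s^{3/2}$ exponent by $s$ in your summands---the curvature term $j^{2}/(4\rho^{3}t^{4/3})$ then still makes the sum over $j$ convergent (of size $O(t^{2/3})$), and the prefactor $e^{-ct^{\delta}}$ absorbs it.
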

\begin{proof}
The system is basically translation invariant (up to order one) and thus we can w.l.o.g.\ consider $\alpha=0$. Since by basic coupling $x_N(t)\leq x^{\rm step}_N(t)$,we have
\begin{equation}
\Pb(x_N(t)\geq (1-2\rho) t+t^{1/3+\delta}) \leq \Pb(x^{\rm step}_N(t)\geq (1-2\rho) t-st^{1/3})
\end{equation}
with $s=-t^\delta$. Using \eqref{BoundsPtPt} we thus get the claimed result.

Union bound applied to \eqref{eqMin} gives
\begin{equation}
\begin{aligned}
&\Pb(x_N(t)\leq (1-2\rho) t-t^{1/3+\delta}) \\
&\qquad\leq \sum_{N_0\leq N} \Pb(x^{\rm step}_{N-N_0+1}(t)\leq (1-2\rho) t+N_0/\rho-t^{1/3+\delta})\\
&\qquad\qquad=\sum_{\eta\in t^{-1}\Z_{\leq t\rho^2}} \Pb(x^{\rm step}_{t(\rho^2-\eta)}(t)\leq (1-2\rho) t+\eta t/\rho-t^{1/3+\delta}).
\end{aligned}
\end{equation}

We want to use the bounds of Lemma~\ref{lemBound2}. These bounds holds uniformly for particle index $\nu t$ with $\nu$ is a bounded set of $(0,1)$. Thus, if particle number is close to $0$ or $t$, we need to bound their distributions by the ones of other particles with particle number macroscopically away from $0$ and $t$.

\emph{Case 1:} Consider $\eta\in [\tfrac34 \rho^2,\rho^2-1/t]$. Then, since particle with index $t(\rho^2-\eta)=N-N_0$ is to the right of particle with index $N-\tfrac34\rho^2 t$, we get
\begin{multline}
\Pb(x^{\rm step}_{t(\rho^2-\eta)}(t)\leq (1-2\rho) t+\eta t/\rho-t^{1/3+\delta}) \leq
\Pb(x^{\rm step}_{t(\rho^2-\eta)}(t)\leq (1-\rho) t-t^{1/3+\delta})\\
\leq \Pb(x^{\rm step}_{\rho^2 t/4}(t)\leq (1-\rho) t-t^{1/3+\delta})\leq C_2 e^{-c_2 t^\delta},
\end{multline}
where we use \eqref{BoundsPtPt}.

\emph{Case 2:} Consider $\eta<\rho^2-\rho$ and set $\nu=\rho^2-\eta$, which is strictly larger than $\rho$. Then, $x_{N-N_0}^{\rm step}(t)\geq x_{N-N_0}^{\rm step}(0)=-\nu t$. Thus
\begin{equation}
\Pb(x^{\rm step}_{t(\rho^2-\eta)}(t)\leq (1-2\rho) t+\eta t/\rho-t^{1/3+\delta}) = 0
\end{equation}
whenever $-\nu t > (1-2\rho)t+\eta t/\rho-t^{1/3+\delta}$. This holds true whenever \mbox{$\nu>\rho-\frac{\rho}{1-\rho} t^{-2/3+\delta}$}. Thus for $\eta<\rho^2-\rho$ and any large enough $t$, the searched probability is strictly equal to zero.

\emph{Case 3:} Finally consider the case $\eta\in [\rho^2-\rho,\tfrac34\rho^2]$ and let $\nu=\rho^2-\eta$.
We have bounds on the fluctuations of $x^{\rm step}_{\nu t}(t)$ with respect to $(1-2\sqrt{\nu})t$. Solving
\begin{equation}
(1-2\sqrt{\nu})t-st^{1/3}=(1-2\rho+\eta/\rho)t-t^{1/3+\delta}
\end{equation}
we get
\begin{equation}\label{eq1.14}
s=2\rho\Big(1-\frac{\eta}{2\rho^2}-\sqrt{1-\frac{\eta}{\rho^2}}\Big)t^{2/3}+t^{\delta}.
\end{equation}
For $\eta\in [-\rho^2,\rho^2]$, $\eqref{eq1.14}\geq t^{\delta}+\eta^2 t^{2/3}/(8\rho^3)$, while for $\eta<-\rho^2$ we have
$\eqref{eq1.14}\geq t^{\delta}+\eta t^{2/3}/(8\rho)$.

Thus, using \eqref{BoundsPtPt}, we get
\begin{equation}
\Pb(x^{\rm step}_{t(\rho^2-\eta)}(t)\leq (1-2\rho) t+\eta t/\rho-t^{1/3+\delta})\leq  C_2 e^{-c_2 t^{\delta}} e^{-c_2 \frac{t^{2/3}}{8\rho^3}\min\{\eta^2,\eta\rho^2\}}.
\end{equation}

Combining the bounds for the three cases and performing the sum over $\eta\in t^{-1} \Z$ leads to the desired result.
\end{proof}

As a consequence, if we know that $x_{N(\tau)}(\tau)=(1-2\rho)\tau - t^{2/3+\e}$, then we have a good estimate on $N(\tau)$, namely \eqref{eq2.9}, as stated in Lemma~\ref{lem2.4} and Lemma~\ref{lem2.5} below. With this we can estimate $x^{\rm step}_{N-N(\tau)+1}(t-\tau)$ and conclude that $x_{N(\tau)}(\tau)+y_{N(\tau)}(\tau,t)$ (which is equal to  $x_N(t)$) is much larger than its typical value. This gives us that it is actually very unlikely to have  $x_{N(\tau)}(\tau)=(1-2\rho)\tau - t^{2/3+\e}$.

\begin{lem}\label{lem2.4}
Let $0<\e<1/3$. Consider  the event that at time $\tau t$ a particle is at position $(1-2\rho)\tau t-t^{2/3+\e}$, i.e.,
\begin{equation}
E_-=\{x_{n}(\tau t)=(1-2\rho)\tau t-t^{2/3+\e}\textrm{ for some }n\in \Z\}.
\end{equation}
On the event $E_-$, let $N_{0}^{-}$ be s.t. $x_{N_{0}^{-}}(\tau t)=(1-2\rho)\tau t-t^{2/3+\e}$.
Then, for all $t$ large enough,
\begin{equation}
\Pb(\{|N_{0}^{-}-\rho^2\tau t -\rho t^{2/3+\e}|\geq \rho t^{1/3+\e}\}\cap E_-)\leq C e^{-c t^{\e}},
\end{equation}
for some constants $C,c>0$ independent of $\tau \in [0,1]$.
\end{lem}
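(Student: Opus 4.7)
The plan is to convert the event on the label $N_0^-$ into events on particle positions at time $\tau t$, and then apply the concentration estimates of Proposition~\ref{PropPtLineBounds} at the intermediate time $s=\tau t$. Set
\[
\hat N:=\rho^2\tau t+\rho t^{2/3+\e},\qquad y_*:=(1-2\rho)\tau t-t^{2/3+\e}.
\]
On $E_-$ the particle at $y_*$ carries label $N_0^-$. Since particles preserve their right-to-left ordering,
\[
\{N_0^-\geq n\}\cap E_-\subseteq\{x_n(\tau t)\geq y_*\},\qquad \{N_0^-\leq n\}\cap E_-\subseteq\{x_n(\tau t)\leq y_*\}.
\]
Choosing $N_\pm:=\lfloor\hat N\pm\rho t^{1/3+\e}\rfloor$ and taking a union bound reduces the problem to estimating $\Pb(x_{N_+}(\tau t)\geq y_*)+\Pb(x_{N_-}(\tau t)\leq y_*)$.

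Writing $N_\pm=\rho^2(\tau t)+\rho\alpha_\pm(\tau t)$ with $\alpha_\pm=(t^{2/3+\e}\pm t^{1/3+\e})/(\tau t)$, the typical position of $x_{N_\pm}(\tau t)$ equals exactly $y_*\mp t^{1/3+\e}$, so both events correspond to fluctuations of size $t^{1/3+\e}$ from the mean. Applying Proposition~\ref{PropPtLineBounds} at time $s=\tau t$ with $\delta=\e\in(0,1/3)$, and using $t^{1/3+\e}\geq s^{1/3+\e}$ (since $s\leq t$) to enlarge the deviation scale, yields bounds of the form $Ce^{-c(\tau t)^{3\e/2}}$ for the upper tail and $\tilde Ce^{-\tilde c(\tau t)^{\e}}$ for the lower tail. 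The $t$-dependence of $\alpha_\pm$ is harmless because $\alpha_\pm\to 0$ and the proof of Proposition~\ref{PropPtLineBounds} reduces the problem to $\alpha=0$ via translation invariance, so its constants are uniform over small $\alpha$.

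The main obstacle is uniformity in $\tau\in[0,1]$: for $\tau$ close to zero the exponents $(\tau t)^{\e}$ degrade. I handle the small-$\tau$ regime by a direct computation. Writing $x_{N_0^-}(\tau t)=-\lfloor N_0^-/\rho\rfloor+\Delta$, where $\Delta$ is the number of successful jumps of particle $N_0^-$ between $0$ and $\tau t$ and is bounded by a Poisson$(\tau t)$ variable, the defining equation $x_{N_0^-}(\tau t)=y_*$ gives the identity
\[
N_0^--\hat N=\rho\bigl(\Delta-(1-\rho)\tau t\bigr)+O(1).
\]
Combining this with a Poisson tail bound for $\Delta$ shows that $|N_0^--\hat N|\leq\rho t^{1/3+\e}$ holds automatically in the regime $\tau t\lesssim t^{1/3+\e}$, except on a set of probability at most the required exponential in $t^{\e}$. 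Collecting this deterministic-plus-Poisson analysis in the small-$\tau$ regime with the Proposition~\ref{PropPtLineBounds} estimates in the remaining range, and absorbing the resulting exponents into a single constant (shrinking $\e$ in the exponent of the bound if necessary), gives the claimed uniform estimate.
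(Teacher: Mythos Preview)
Your approach is essentially the paper's: both split into $\{N_0^-<\hat N-\rho t^{1/3+\e}\}$ and $\{N_0^->\hat N+\rho t^{1/3+\e}\}$, convert each into a position event via the ordering $x_{n+1}<x_n$, and invoke Proposition~\ref{PropPtLineBounds} at time $\tau t$. The paper is actually less careful than you about uniformity in $\tau$: after the same weakening $t^{1/3+\e}\to(\tau t)^{1/3+\e}$ it simply records $\tilde C e^{-\tilde c t^{\e}}$ without comment, whereas you correctly note that Proposition~\ref{PropPtLineBounds} applied verbatim at time $\tau t$ only yields $e^{-\tilde c(\tau t)^{\e}}$.

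One point to tighten: your remedy of ``shrinking $\e$ in the exponent of the bound'' gives a weaker conclusion than the lemma claims (the same $\e$ appears on both sides). A cleaner fix is to drop the weakening step altogether. At time $\tau t$ the deviation $t^{1/3+\e}$ equals $s\cdot(\tau t)^{1/3}$ with $s=t^{\e}\tau^{-1/3}\ge t^{\e}$, and the proof of Proposition~\ref{PropPtLineBounds} (which runs through Lemma~\ref{lemBound2}) really produces $e^{-c_2 s}$ for the lower tail and $e^{-c_1 s^{3/2}}$ for the upper; hence one gets $Ce^{-ct^{\e}}$ uniformly in $\tau\in(0,1]$ once $\tau t$ exceeds the fixed threshold $t_0$ of Lemma~\ref{lemBound2}. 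Your deterministic-plus-Poisson computation then disposes of the remaining range $\tau t\le t_0$ (in fact of all $\tau t\lesssim t^{1/3+\e}$, as you observe), and no adjustment of $\e$ is needed.
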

\begin{proof}
We prove that the fluctuations of $N_{0}^{-}$ around $\rho^2\tau t+\rho t^{2/3+\e}$ are not of larger order than $t^{1/3}$.

\emph{Case 1: $N_{0}^{-}<\rho^2\tau t+\rho t^{2/3+\e}-\rho t^{1/3+\e}=:(\rho^2+\alpha\rho)\tau t$}, so that \mbox{$\alpha=\tau^{-1} (t^{-1/3+\e}-t^{-2/3+\e})$}. Define the event
${\cal N}_-=\{N_{0}^{-}<(\rho^2+\alpha\rho)\tau t\}$. Then $x_{N_{0}^{-}}(\tau t)\geq x_{(\rho^2+\alpha\rho)\tau t}(\tau t)$
so that
\begin{equation}\label{eq1.18}
\begin{aligned}
\Pb({\cal N}_-\cap E_-)&\leq \Pb({\cal N}_-\cap \{x_{N_{0}^{-}}(\tau t)\leq (1-2\rho) \tau t-t^{2/3+\e}\})\\
&\leq \Pb({\cal N}_-\cap \{x_{(\rho^2+\alpha\rho)\tau t}(\tau t)\leq (1-2\rho)\tau t-t^{2/3+\e}\})\\
&\leq \Pb(x_{(\rho^2+\alpha\rho)\tau t}(\tau t)\leq (1-2\rho)\tau t-t^{2/3+\e})\\
&=\Pb(x_{(\rho^2+\alpha\rho)\tau t}(\tau t)\leq (1-2\rho-\alpha)\tau t-t^{1/3+\e})\\
&\leq \Pb(x_{(\rho^2+\alpha\rho)\tau t}(\tau t)\leq (1-2\rho-\alpha)\tau t-(\tau t)^{1/3+\e})\\
&\leq   \widetilde C e^{-\tilde c t^{\e}},
\end{aligned}
\end{equation}
where in the last step we used the second inequality of Proposition~\ref{PropPtLineBounds} with $t\to \tau t$.

\emph{Case 2: $N_{0}^{-}>\rho^2\tau t+\rho t^{2/3+\e}+\rho t^{1/3+\e}$}. This case is treated similarly, with just $\leq$ inequalities in the events replaced by $\geq$, and in the last step we use the first (stronger) inequality of Proposition~\ref{PropPtLineBounds}.
By renaming the constants we get the claimed result.
\end{proof}

In exactly the same way one proves the following.
\begin{lem}\label{lem2.5}
Let $0<\e<1/3$. Consider for the event that at time $\tau t$ a particle is at position $(1-2\rho)\tau t+t^{2/3+\e}$, i.e.,
\begin{equation}
E_+=\{x_{n}(\tau t)=(1-2\rho)\tau t+t^{2/3+\e}\textrm{ for some }n\in \Z\}.
\end{equation}
On the event $E_+$, let $N_{0}^{+}$ be s.t. $x_{N_{0}^{+}}(\tau t)=(1-2\rho)\tau t+t^{2/3+\e}$.
Then, for all $t$ large enough,
\begin{equation}
\Pb(\{|N_{0}^{+}-\rho^2\tau t +\rho t^{2/3+\e}|\geq \rho t^{1/3+\e}\}\cap E_+)\leq C e^{-c t^{\e}},
\end{equation}
for some constants $C,c>0$ independent of $\tau \in [0,1]$.
\end{lem}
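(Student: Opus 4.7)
The plan is to mirror the argument of Lemma~\ref{lem2.4}: the particle is now displaced by $+t^{2/3+\e}$ to the right of the characteristic rather than by $-t^{2/3+\e}$ to the left, so the expected label is $\rho^2\tau t-\rho t^{2/3+\e}$ and all comparison inequalities reverse accordingly. Since particles are labeled right-to-left, $N_0^+<K$ will mean that particle $N_0^+$ sits to the right of particle $K$, hence $x_{N_0^+}(\tau t)\geq x_K(\tau t)$.

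First I split the bad event into two cases. In \emph{Case 1}, $N_0^+<\rho^2\tau t-\rho t^{2/3+\e}-\rho t^{1/3+\e}=:(\rho^2+\alpha\rho)\tau t$, so that $\alpha\tau t=-t^{2/3+\e}-t^{1/3+\e}$. Then on $E_+\cap\{N_0^+<(\rho^2+\alpha\rho)\tau t\}$ the monotonicity in labels gives
\begin{equation*}
x_{(\rho^2+\alpha\rho)\tau t}(\tau t)\leq x_{N_0^+}(\tau t)=(1-2\rho)\tau t+t^{2/3+\e}=(1-2\rho-\alpha)\tau t-t^{1/3+\e},
\end{equation*}
and the second (left-tail) inequality of Proposition~\ref{PropPtLineBounds} applied with $t\mapsto\tau t$ bounds this probability by $\widetilde C e^{-\tilde c t^{\e}}$.

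In \emph{Case 2}, $N_0^+>\rho^2\tau t-\rho t^{2/3+\e}+\rho t^{1/3+\e}=:(\rho^2+\alpha\rho)\tau t$, so now $\alpha\tau t=-t^{2/3+\e}+t^{1/3+\e}$, and particle $N_0^+$ lies to the left of particle $(\rho^2+\alpha\rho)\tau t$. This yields
\begin{equation*}
x_{(\rho^2+\alpha\rho)\tau t}(\tau t)\geq (1-2\rho)\tau t+t^{2/3+\e}=(1-2\rho-\alpha)\tau t+t^{1/3+\e}.
\end{equation*}
The first (stronger, right-tail) inequality of Proposition~\ref{PropPtLineBounds} then produces a bound of the form $C e^{-c t^{3\e/2}}$, which is absorbed into a $C e^{-c t^\e}$-type bound. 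Summing the two case probabilities and renaming constants yields the claim, uniformly in $\tau\in[0,1]$ in the same way as in Lemma~\ref{lem2.4}.

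The argument is essentially a bookkeeping exercise; there is no genuine analytic obstacle once the direction of the inequalities induced by the right-to-left labeling is kept straight. The only subtle point worth double-checking is the uniformity in $\tau\in[0,1]$ when applying Proposition~\ref{PropPtLineBounds} after substituting $t\mapsto\tau t$, but this is handled identically to the companion Lemma~\ref{lem2.4} and causes no additional difficulty here.
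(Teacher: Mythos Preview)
Your proposal is correct and follows exactly the approach the paper intends: the paper simply states ``In exactly the same way one proves the following'' after Lemma~\ref{lem2.4}, and your mirroring of the two-case split with the sign of $t^{2/3+\e}$ flipped (so the expected label becomes $\rho^2\tau t-\rho t^{2/3+\e}$) and the corresponding reversal of inequalities is precisely what is meant. Your identification of which tail of Proposition~\ref{PropPtLineBounds} to invoke in each case is also correct.
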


Now that we have a good control of the particle number, we can get good estimates on $x^{\rm step}_{N-N_{0}^{\pm}+1}((1-\tau)t)$.
\begin{prop}\label{propBoundStep}
Let $E_\pm, N_{0}^{\pm}$ be  defined as above. Then
\begin{equation}\label{eq2.26}
\Pb(\{x^{\rm step}_{N-N_0^{-}+1}((1-\tau) t)\leq (1-2\rho)(1-\tau)t+t^{2/3+\e}+\tfrac18 t^{1/3+2\e}\}\cap E_-)\leq C e^{-c t^\e},
\end{equation}
and
\begin{equation}\label{eq2.27}
\Pb(\{x^{\rm step}_{N-N_0^{+}+1}((1-\tau) t)\leq (1-2\rho)(1-\tau)t-t^{2/3+\e}+\tfrac18 t^{1/3+2\e}\}\cap E_+)\leq C e^{-c t^\e}.
\end{equation}
\end{prop}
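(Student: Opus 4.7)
My plan is to reduce each of \eqref{eq2.26} and \eqref{eq2.27} to a lower-tail estimate for a step-initial-data TASEP particle, using Lemma~\ref{lem2.4}/\ref{lem2.5} to pin $N_0^\pm$ to a deterministic window of width $2\rho t^{1/3+\e}$ and then exploiting the fact that the forced shift $\mp\rho t^{2/3+\e}$ in the step index produces, via the concavity of $k\mapsto(1-2\sqrt{k/s})s$, a quadratic gain of order $t^{1/3+2\e}$ in the typical position of $x^{\rm step}_{\cdot}(s)$ -- one order of $t^\e$ above both the index-pinning error $t^{1/3+\e}$ and the intrinsic KPZ fluctuations $t^{1/3}$.

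For \eqref{eq2.26}, set $s=(1-\tau)t$ and $A_-=\{|N_0^--\rho^2\tau t-\rho t^{2/3+\e}|\leq \rho t^{1/3+\e}\}$; Lemma~\ref{lem2.4} yields $\Pb(E_-\cap A_-^c)\leq Ce^{-c t^\e}$, so it suffices to bound the event intersected with $A_-$. Since $k\mapsto x^{\rm step}_k(s)$ is nonincreasing, on $A_-$ one has $x^{\rm step}_{N-N_0^-+1}(s)\geq x^{\rm step}_M(s)$ with $M:=\rho^2 s-\rho t^{2/3+\e}+\rho t^{1/3+\e}+1$ deterministic, and $x^{\rm step}_M(s)$ is a step-TASEP particle whose law depends only on forward-time increments on $[\tau t,t]$ and is thus independent of $E_-$. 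Taylor-expanding around $M/s=\rho^2$, which is legitimate since $M/s=\rho^2+\Or(t^{-1/3+\e})$,
\begin{equation}
\Bigl(1-2\sqrt{M/s}\Bigr)s=(1-2\rho)(1-\tau)t+t^{2/3+\e}+\frac{t^{1/3+2\e}}{4\rho(1-\tau)}+\Or(t^{1/3+\e}).
\end{equation}
Because $\rho(1-\tau)\leq 1$, the quadratic term is at least $\tfrac14 t^{1/3+2\e}$, so this typical value beats the threshold in \eqref{eq2.26} by at least $\tfrac18 t^{1/3+2\e}-\Or(t^{1/3+\e})\geq \tfrac1{16}t^{1/3+2\e}$ for $t$ large. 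Applying the lower-tail estimate \eqref{BoundsPtPt} to $x^{\rm step}_M(s)$ with deviation $\tfrac1{16}t^{1/3+2\e}=r\,s^{1/3}$ gives $r\geq c t^{2\e}/(1-\tau)^{1/3}\geq c t^{2\e}$ and hence $\Pb\bigl(x^{\rm step}_M(s)\leq\cdots\bigr)\leq Ce^{-c t^{2\e}}\leq Ce^{-c t^\e}$, as required.

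The bound \eqref{eq2.27} is strictly parallel: Lemma~\ref{lem2.5} locates $N_0^+$ in a window about $\rho^2\tau t-\rho t^{2/3+\e}$, so the relevant deterministic upper bound on the step index becomes $M:=\rho^2 s+\rho t^{2/3+\e}+\rho t^{1/3+\e}+1$; in the Taylor expansion only the sign of the linear correction flips to $-t^{2/3+\e}$ (matching the threshold), while the curvature gain $+t^{1/3+2\e}/(4\rho(1-\tau))\geq \tfrac14 t^{1/3+2\e}$ is unchanged, and the same lower-tail estimate concludes the argument. The main technical nuisance I anticipate is uniformity for $\tau$ near $1$: when $1-\tau\leq t^{-\delta_0}$ for some small $\delta_0$, the Taylor remainder is no longer negligible and $M$ may even be close to $0$, but in that regime $E_\pm$ already forces a particle to travel distance $t^{2/3+\e}$ off-characteristic in time $o(t)$, an event whose probability is itself $\leq Ce^{-ct^\e}$ by Proposition~\ref{PropPtLineBounds}, so this endpoint regime can be absorbed into the final bound.
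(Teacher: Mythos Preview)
Your argument follows essentially the same route as the paper: restrict to the good window for $N_0^\pm$ via Lemma~\ref{lem2.4}/\ref{lem2.5}, expand $(1-2\sqrt{\nu})(1-\tau)t$ around $\nu=\rho^2$ to extract the curvature gain $t^{1/3+2\e}/(4\rho(1-\tau))\ge\tfrac14 t^{1/3+2\e}$, and conclude with the lower-tail bound of Lemma~\ref{lemBound2}. Your explicit monotonicity reduction to a deterministic index $M$ is a mild cleanup of what the paper does by carrying the random parameter $|\gamma|<1$ through the estimate; the independence of $x^{\rm step}_M(s)$ from $E_-$ that you invoke is correct under the natural coupling but not needed, since the crude bound $\Pb(\{\cdots\}\cap E_-\cap A_-)\le\Pb(x^{\rm step}_M(s)\le\cdots)$ already suffices.

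One correction to your side remark: the proposed endpoint workaround is wrong. The event $E_\pm$ only asserts that \emph{some} particle occupies a prescribed site at time $\tau t$, which for the flat initial condition is a generic event with probability bounded away from zero uniformly in $\tau$; Proposition~\ref{PropPtLineBounds} gives no bound on $\Pb(E_\pm)$. The paper's own proof does not address the regime $\tau\to1$ explicitly either, so this is not a divergence from the paper but a shared loose end regarding uniformity in $\tau$ (relevant only for the union bound in Theorem~\ref{thmLocalization}). A correct way to dispose of that regime is to note that on $A_-$ with $1-\tau\lesssim t^{-1/3+\e}/\rho$ one has $N_0^->N$, so in the localization proof the identification $N_0^-=N(t_0)\le N$ forces $A_-^c$, which already has probability $\le Ce^{-ct^\e}$ by Lemma~\ref{lem2.4}.
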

\begin{proof}We give the details of the proof only for \eqref{eq2.26}, since the proof of \eqref{eq2.27} is analogous. Let  ${\cal N}_-=\{|N_{0}^{-}-\rho^2\tau t-\rho t^{2/3+\e}|>\rho t^{1/3+\e}\}$ and \mbox{$P_-=\{x^{\rm step}_{N-N_{0}^{-}+1}((1-\tau) t)\leq (1-2\rho)(1-\tau)t+t^{2/3+\e}+\tfrac18 t^{1/3+2\e}\}$}.
Then
\begin{equation}
\begin{aligned}
\Pb(P_-\cap E_-)&\leq \Pb(P_-\cap E_- \cap {\cal N}_-^c)+\Pb(E_-\cap {\cal N}_-)\\
&\leq \Pb(P_-\cap {\cal N}_-^c)+ C e^{-ct^\e}
\end{aligned}
\end{equation}
where in the second step we used Lemma~\ref{lem2.4}. In the event ${\cal N}_-^c$, we know that $N_{0}^{-}=\rho^2\tau t+\rho t^{2/3+\e}+\gamma t^{1/3+\e}$ for some (random) $|\gamma|<1$. Setting $N-N_{0}^{-}=\nu (1-\tau)t$ an easy computation gives
\begin{multline}
(1-2\rho)(1-\tau)t+t^{2/3+\e}+\tfrac18 t^{1/3+2\e}-(1-2\sqrt{\nu})(1-\tau)t \\
=\Big(\frac18-\frac{1}{2\rho(1-\tau)}\Big) t^{1/3+2\e}+\Or(t^{1/3+\e}; t^{3\e})\leq -\tfrac18 t^{1/3+2\e}
\end{multline}
for all $t$ large enough. Thus, using Lemma~\ref{lemBound2}, we get
\begin{equation}
\Pb(P_-\cap {\cal N}_-^c) \leq\Pb(x^{\rm step}_{\nu(1-\tau)t}((1-\tau)t)\leq (1-2\sqrt{\nu})(1-\tau)t-\tfrac18 t^{1/3+2\e})\leq 2 C e^{-c t^\e},
\end{equation}
for all $t$ large enough, which finishes the proof.
\end{proof}

Now we can prove the localization theorem.
\begin{proof}[Proof of Theorem~\ref{thmLocalization}]
If $\omega\in\Omega_\e\cap G$, then the backwards path starts inside the cylinder and at some time crosses its boundary ${\cal L}_+\cup{\cal L}_-$. Denote by ${\cal I}_\pm(t_0)$ the events that there is a crossing in the time interval $[t_0,t_0+1)$, i.e. \mbox{${\cal I}_\pm(t_0)=\{x_{N(t_1)}(t_1)\in{\cal L}_\pm\textrm{ for some }t_1\in [t_0,t_0+1)\}$}. Then
\begin{equation}\label{eq2.32}
\Omega_\e\cap G = \bigcup_{t_0 \in \Z \cap [0,t]}{\cal I}_-(t_0)\cup{\cal I}_+(t_0).
\end{equation}

First let us control that the displacement in a time interval bounded by $1$ is also small. For any backwards path, define the event that in a time distance bounded by $1$ the increment is bounded by $t^\e$, namely for $t_0\in \Z\cap[0,t]$, set
\begin{equation}
\begin{aligned}
D_{t_0}=&\{|x_{N(t_0)}(t_0)-x_{N(t_1)}(t_1)|\leq t^{\e}\textrm{ for all }t_1\in [t_0,t_0+1)\}\\
&\cap\{|y_{N(t_0)}(t_0,t)-y_{N(t_1)}(t_1,t)|\leq t^{\e}\textrm{ for all }t_1\in [t_0,t_0+1)\},
\end{aligned}
\end{equation}
and let
\begin{equation}
D=\bigcap_{t_0\in\Z\cap[0,t]}D_{t_0}.
\end{equation}
By the graphical construction it is immediate that $|x_{N(t_0)}(t_0)-x_{N(t_1)}(t_1)|$ is dominated by a Poisson process with intensity $1$. Furthermore, since \mbox{$x_N(t)=x_{N(t_0)}(t_0)+y_{N(t_0)}(t_0,t) = x_{N(t_1)}(t_1)+y_{N(t_1)}(t_1,t)$}, the two events in $D_{t_0}$ are actually the same. Therefore we have
\begin{equation}
\Pb(D^c)\leq \sum_{t_0\in\Z\cap[0,t]}\Pb(D_{t_0}^c)\leq C te^{-2 t^{\e}}\leq C e^{-t^\e}.
\end{equation}

Define the event
\begin{equation}
F=\{x_N(t)\leq (1-2\rho)t+\tfrac18 t^{1/3+2\e}-2t^\e\}.
\end{equation}
Then, first of all,
\begin{equation}
\Pb(\Omega_\e)\leq \Pb(\Omega_\e\cap G\cap D\cap F)+\Pb(F^c) + \Pb(G^c) + \Pb(D^c).
\end{equation}
By Proposition~\ref{PropPtLineBounds} with $\alpha=0$ we know that $\Pb(G^c)\leq C e^{-c t^{\e}}$ and also $\Pb(F^c)\leq  C e^{-c t^\e}$. Thus it remains to get a bound for $\Pb(\Omega_\e\cap G\cap D\cap F)$.

\emph{Case 1: For some $t_0$ integer, ${\cal I}_-(t_0)$ and $D$ hold.} In this case, we have $x_N(t)\geq (1-2\rho)t_0-t^{2/3+\e}+x^{\rm step}_{N-N(t_0)+1}(t-t_0)-1-t^\e$. By Proposition~\ref{propBoundStep}, we get then
\begin{equation}\label{eq2.36}
x_N(t)\geq (1-2\rho)t+\tfrac18 t^{1/3+2\e}-1-t^\e
\end{equation}
on a set of probability bounded from below by $1-C e^{-c t^\e}$. But since the condition \eqref{eq2.36} and the one defining $F$ are incompatible, we have
\begin{equation}
\Pb({\cal I}_-(t_0)\cap D\cap F)\leq C e^{-c t^\e}.
\end{equation}

\emph{Case 2: For some $t_0$ integer, ${\cal I}_+(t_0)$ and $D$ hold.} In this case, we have $x_N(t)\geq(1-2\rho)t_0+t^{2/3+\e}+x^{\rm step}_{N-N_0+1}(t-t_0)-1-t^\e$ and by using Proposition~\ref{propBoundStep} we get as well
\begin{equation}
\Pb({\cal I}_+(t_0)\cap D\cap F)\leq C e^{-c t^\e}.
\end{equation}

Thus, we have shown that
\begin{equation}
\Pb(\Omega_\e\cap G\cap D\cap F) \leq \sum_{t_0\in\Z\cap[0,t]}\sum_{\sigma\in\{-,+\}} \Pb({\cal I}_\sigma(t_0)\cap D\cap F) \leq 2 C te^{-c t^{\e}}.
\end{equation}
By redefining the constants the result claimed is proven.\end{proof}

\subsubsection*{Slow decorrelation}
For KPZ models, typically (away from shocks for instance), the fluctuation of two particles at time $t$ are non-trivially correlated if their distance is of order $t^{2/3}$, i.e., their relative fluctuations is of the same order of the fluctuation of a single one, namely in the $t^{1/3}$ scale.

The last ingredient of the proof of the main result is the following statement. This is called slow decorrelation because it says that the fluctuations of a particle at time $t$ differs from the fluctuations of a particle at time $t-o(t)$ only by an amount which of order $o(t^{1/3})$, provided that the two particles are on a given characteristic line, i.e., the decorrelation along characteristic lines occurs over macroscopic time scales, in contrast to the $t^{2/3}$ scale for spatial correlations.
\begin{thm}[Slow decorrelation]\label{thmSlowDec}
For all $\delta>0$ and some $0<\nu<1$,
\begin{equation}\label{eqSlowDec}
\lim_{t\to\infty} \Pb(|x_{\rho^2 t}(t)-x_{\rho^2 (t-t^\nu)}(t-t^\nu)-(1-2\rho)t^\nu|\geq \delta t^{1/3})=0.
\end{equation}
\end{thm}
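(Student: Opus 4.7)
The plan is to follow the standard slow decorrelation mechanism of~\cite{Fer08,CFP10b}: pair a one-sided coupling inequality with the shared $F_{\rm GOE}$ one-point limit from Proposition~\ref{PropCvgDistr} and upgrade to a two-sided bound through a distributional argument. Fix any $\nu\in(0,1)$ and set $s=t-t^\nu$. Applying \eqref{eq1.5} with $N=\rho^2 t$, $k=\rho^2 s$ and $\tau=s$ gives the coupling bound
\begin{equation}
x_{\rho^2 t}(t)\leq x_{\rho^2 s}(s)+y_{\rho^2 s}(s,t),\qquad y_{\rho^2 s}(s,t)\stackrel{(d)}{=}x^{\rm step}_{\rho^2 t^\nu+1}(t^\nu).
\end{equation}
The step-IC particle on the right sits in the bulk of the rarefaction fan at local density $\rho$, so its mean is $(1-2\rho)t^\nu$ with fluctuations of order $t^{\nu/3}=o(t^{1/3})$. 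The tail estimate \eqref{BoundsPtPt} therefore yields, for every fixed $\delta>0$, $\Pb(y_{\rho^2 s}(s,t)\geq (1-2\rho)t^\nu+\tfrac{\delta}{2}t^{1/3})\to 0$, which combined with the coupling produces the one-sided statement
\begin{equation}\label{eqSDone}
\lim_{t\to\infty}\Pb\bigl(x_{\rho^2 t}(t)-x_{\rho^2 s}(s)-(1-2\rho)t^\nu\geq \delta t^{1/3}\bigr)=0.
\end{equation}

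For the opposite direction I upgrade \eqref{eqSDone} to a two-sided bound by exploiting the shared GOE limit. Normalize
\begin{equation}
X_t=\frac{x_{\rho^2 t}(t)-(1-2\rho)t}{t^{1/3}},\qquad Y_t=\frac{x_{\rho^2 s}(s)-(1-2\rho)s}{s^{1/3}}.
\end{equation}
By Proposition~\ref{PropCvgDistr} applied at times $t$ and $s$, both $X_t$ and $Y_t$ converge in distribution to the same law, $-\sigma(\rho)\chi_{\rm GOE}$ with $\chi_{\rm GOE}\sim F_{\rm GOE}$; consequently the pair $(X_t,Y_t)$ is tight. A direct calculation shows $X_t-(s/t)^{1/3}Y_t=t^{-1/3}\bigl(x_{\rho^2 t}(t)-x_{\rho^2 s}(s)-(1-2\rho)t^\nu\bigr)$, so \eqref{eqSDone} reads $\Pb(X_t-(s/t)^{1/3}Y_t\geq\delta)\to 0$ for each $\delta>0$. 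Since $(s/t)^{1/3}\to 1$, every subsequential weak limit $(X_\infty,Y_\infty)$ satisfies $X_\infty\leq Y_\infty$ almost surely while having identical marginals; the elementary fact that an almost-sure inequality between random variables with the same law must be an equality forces $X_\infty=Y_\infty$ a.s. Thus $X_t-Y_t\to 0$ in probability, and since $(1-(s/t)^{1/3})Y_t=o_{\Pb}(1)$ by tightness of $Y_t$, also $t^{-1/3}\bigl(x_{\rho^2 t}(t)-x_{\rho^2 s}(s)-(1-2\rho)t^\nu\bigr)\to 0$ in probability, which is~\eqref{eqSlowDec}.

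The main obstacle is rather mild: given Proposition~\ref{PropCvgDistr} and the coupling~\eqref{eq1.5}, the argument is a direct implementation of the standard slow decorrelation principle, whose crucial input is the \emph{shared} limiting law at the two times, making the one-sided coupling bound enforce two-sided equality in the limit. Beyond this no finer correlation estimate is required, and the restriction $\nu<1$ enters solely through $t^{\nu/3}=o(t^{1/3})$; the argument therefore establishes the claim for every $\nu\in(0,1)$, strengthening the ``for some $\nu$'' in the theorem.
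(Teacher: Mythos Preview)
Your argument is correct and follows essentially the same route as the paper's own proof: the paper invokes the coupling inequality~\eqref{eq1.5}, the shared GOE limit at the two times (Proposition~\ref{PropCvgDistr}), and the convergence of the step piece on scale $t^{\nu/3}$, then appeals to the standard slow-decorrelation mechanism of~\cite{CFP10b,Fer18} to conclude. The only cosmetic difference is that you control the step contribution via the tail bound~\eqref{BoundsPtPt} rather than via its GUE limit, which is an equally valid (and slightly more direct) way to get $y_{\rho^2 s}(s,t)-(1-2\rho)t^\nu=o_{\Pb}(t^{1/3})$.
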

\begin{proof}
Slow decorrelation was proven in~\cite{CFP10b} in terms of height functions. In Proposition~3.3 of~\cite{Fer18} the proof have been adapted to the observable given by the particle position. The key input is the existence of non-trivial distributions $D,D'$ such that
\begin{equation}\label{eq2.42}
\begin{aligned}
&\lim_{t\to\infty}\frac{x_{\rho^2 t}(t)-(1-2\rho)t}{-t^{1/3}}\stackrel{(d)}{=}D,\\
&\lim_{t\to\infty}\frac{x_{\rho^2 (t-t^\nu)}((t-t^\nu))-(1-2\rho)(t-t^\nu)}{-t^{1/3}}\stackrel{(d)}{=}D,\\
&\lim_{t\to\infty}\frac{x^{\rm step}_{\rho^2 t^\nu}(t^\nu)-(1-2\rho)t^\nu}{-t^{\nu/3}}\stackrel{(d)}{=}D'.
\end{aligned}
\end{equation}
Then one uses \eqref{eq1.5}, namely $x_{\rho^2 t}(t)\leq x_{\rho^2 (t-t^\nu)}(t-t^\nu) + x^{\rm step}_{\rho^2 t^\nu}(t^\nu)$, to deduce \eqref{eqSlowDec}. The distribution function $D$ is given in terms of the GOE Tracy-Widom distribution function in~\cite{FO17}, see Proposition~\ref{PropCvgDistr}, while the convergence of the step initial condition is well known to be the GUE Tracy-Widom distribution function~\cite{Jo00b}.
\end{proof}

\section{Proof of Theorem~\ref{ThmMain}}\label{SectProof}
The strategy of the proof is the following (see Figure~\ref{FigDecompProof} for an illustration):
\begin{enumerate}
\item First we decompose the problem into three simpler initial conditions, so that $x_N(t)=\min\{x^1_N(t),x^2_N(t),x^3_N(t)\}$, see Proposition~\ref{propMinimum}.
\item For each  of the simpler problems, we can construct the backwards paths and show that, for all $\nu<1$, there are constants $c_k,N_k$ such that $x^k_{N}(t)-x^k_{N_k}(t-T^\nu)-c_k T^\nu=o(T^{1/3})$, i.e., the fluctuations at time $t$ are governed by the ones at time $t-T^\nu$ already, see Corollary~\ref{corSlowDec}.
\item Finally we show that the backwards paths for $x^k_{N_k}(t-T^\nu)$ are localized inside cylinders of width $t^{2/3+\e}$ with high probability, see Corollary~\ref{corLocalization}. This implies that $x^k_{N_k}(t-T^\nu)$, $k=1,2,3$ are asymptotically independent, as the cylinders, for all $T$ large enough, are disjoint provided that $\nu>2/3+\e$. From this asymptotic independence, we  obtain the product form of the limiting distribution of $x_N(t)$.
\end{enumerate}

\begin{figure}
\begin{center}
\psfrag{C1}[r]{${\cal C}_1$}
\psfrag{C2}[c]{${\cal C}_2$}
\psfrag{C3}[l]{${\cal C}_3$}
\psfrag{t}[lb]{$t$}
\psfrag{t1}[lb]{$t-T^\nu$}
\includegraphics[height=5cm]{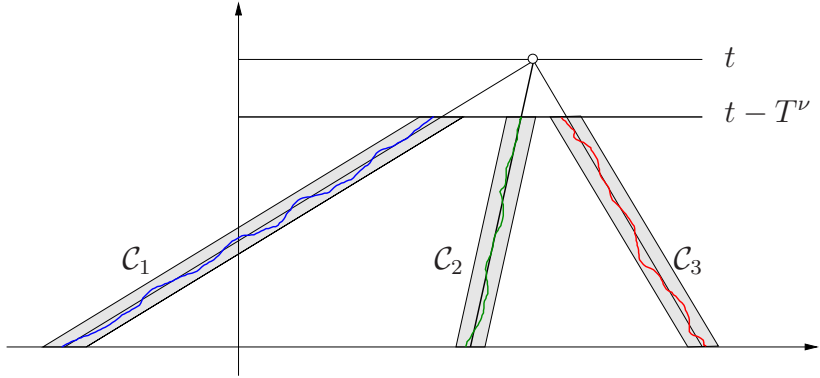}
\caption{By slow-decorrelation, the fluctuations of $x^k_N(t)$ are essentially the same as $x^k_{N_k}(t-T^\nu)$ with $N_k=N-\rho_k T^\nu$. The colored irregular lines represent the backwards paths starting at $x^k_{N_k}(t-T^\nu)$, $k=1,2,3$, whose laws depend only on randomness in the cylinder ${\cal C}_k$ with high probability (localization). These cylinder are however disjoint for $2/3+\e<\nu<1$, giving asymptotic independence.}
\label{FigDecompProof}
\end{center}
\end{figure}

\paragraph{I. Decomposition into three simpler problems.}
The distribution of $x_n(t)$ can be written as the minimum of three different initial conditions, see Figure~\ref{FigDecomposition}. These are obtained as follows:
\begin{itemize}
\item[(1)] all particles to the right of the origin are removed,
\begin{equation}
x_n^1(0)=-\lfloor n/\rho_1\rfloor, \textrm{for }n\geq 0,
\end{equation}
\item[(2)] all particles to the right of $T$ are removed, while particles to the left of the origin are packed into a step initial condition to the left of $0$,
\begin{equation}
x_n^2(0)=\left\{
\begin{array}{ll}
-n,& \textrm{for }n\geq 0,\\
-\lfloor n/\rho_2\rfloor,& \textrm{for }-1\leq n\leq -\lfloor T\rho_2\rfloor,
\end{array}
\right.
\end{equation}
\item[(3)] all particles to the left of $T$ are packed into a step initial condition to the left of $T$.
\begin{equation}
x_n^3(0)=\left\{
\begin{array}{ll}
T-(n+\lfloor T\rho_2\rfloor),& \textrm{for }n\geq -\lfloor T\rho_2\rfloor,\\
T-\lfloor (n+\lfloor T\rho_2\rfloor)/\rho_3\rfloor,& \textrm{for } n<-\lfloor T\rho_2\rfloor.
\end{array}
\right.
\end{equation}
\end{itemize}
\begin{figure}
\begin{center}
\psfrag{r1}[cb]{$\rho_1$}
\psfrag{r2}[cb]{$\rho_2$}
\psfrag{r3}[ct]{$\rho_3$}
\psfrag{0}[cb]{$0$}
\psfrag{T}[cb]{$T$}
\includegraphics[height=3.5cm]{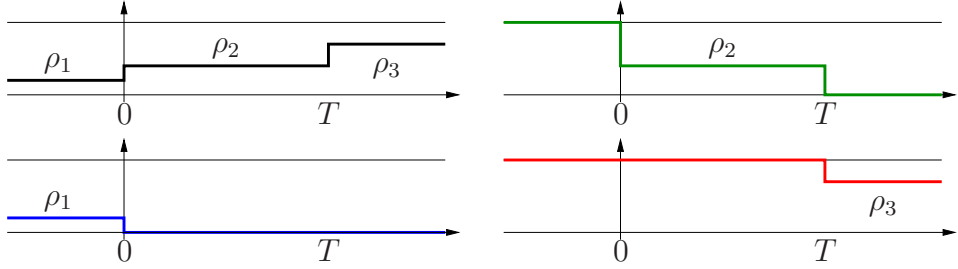}
\caption{Densities of the different initial conditions. The original model is illustrated in the top-left figure, $x^1(0)$ in the bottom-left, $x^2(0)$ in the top-right, and $x^3(0)$ in the bottom-right figure.}
\label{FigDecomposition}
\end{center}
\end{figure}

Using the graphical construction for TASEP~\cite{Lig76,Har78}, see also~\cite{Li99}, one can couple the processes $\{x_n(t),n\in\Z\}$, $\{x^1_n(t),n\geq 0\}$, $\{x^2_n(t),n\geq -\lfloor T\rho_2\rfloor\}$, and $\{x^3_n(t),n\in\Z\}$. With this basic coupling, since TASEP is attractive, one immediately has, for any $n\geq 1$,
\begin{equation}
x_n(t)\leq x_n^1(t),\quad x_n(t)\leq x_n^2(t),\quad x_n(t)\leq x_n^3(t).
\end{equation}
However we have more than these three inequalities.
\begin{prop}[Decomposition in subproblems]\label{propMinimum}
Let us consider $N>0$. Then we have the following decomposition
\begin{equation}
x_N(t)=\min\{x^1_N(t),x^2_N(t),x^3_N(t)\}.
\end{equation}
\end{prop}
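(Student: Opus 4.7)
The plan is to establish the identity pathwise under the basic coupling on the graphical construction, by proving two matching inequalities.

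The easy direction $x_N(t)\leq \min\{x^1_N(t),x^2_N(t),x^3_N(t)\}$ is immediate from attractiveness. Indeed, inspection of the three subsystems shows that $x^j_n(0)\geq x_n(0)$ for every $n$ in the particle domain of $x^j$: on Region $j$ the initial positions coincide, while on the other regions the subsystem either deletes the original particles or packs them into a step configuration strictly to the right of their original positions. The basic coupling then yields $x^j_N(t)\geq x_N(t)$ for each $j\in\{1,2,3\}$.

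For the reverse inequality $\min_j x^j_N(t)\leq x_N(t)$, I would invoke \eqref{eqMin}, which identifies a random label $N(0)\leq N$ (the endpoint of the backwards path at time $0$) such that
\begin{equation*}
x_N(t)=x^{{\rm step},x_{N(0)}(0)}_{N-N(0)+1}(t).
\end{equation*}
Since $N(0)$ lies in precisely one of the three disjoint regions $R_1=\{n\geq 0\}$, $R_2=\{-\lfloor T\rho_2\rfloor\leq n\leq -1\}$, $R_3=\{n<-\lfloor T\rho_2\rfloor\}$, call this region $R_{j^\ast}$. On $R_{j^\ast}$ the initial condition of the subsystem $x^{j^\ast}$ coincides with that of the original process, so in particular $x^{j^\ast}_{N(0)}(0)=x_{N(0)}(0)$. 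Because the particle domain of $x^{j^\ast}$ contains $R_{j^\ast}$, the label $N(0)$ is admissible for $x^{j^\ast}$, and applying the $x^{j^\ast}$-analog of \eqref{eq1.5} at $\tau=0$ and $k=N(0)$ gives
\begin{equation*}
x^{j^\ast}_N(t)\leq x^{{\rm step},x^{j^\ast}_{N(0)}(0)}_{N-N(0)+1}(t)=x^{{\rm step},x_{N(0)}(0)}_{N-N(0)+1}(t)=x_N(t),
\end{equation*}
where the two step-initial TASEPs agree because they start from the same position and are driven by the same Poisson clocks.

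Combining, $\min_j x^j_N(t)\leq x^{j^\ast}_N(t)\leq x_N(t)\leq \min_j x^j_N(t)$, giving the claimed equality. There is no serious obstacle here: the only point requiring attention is the bookkeeping that $N(0)\in R_{j^\ast}$ is a valid label for the subsystem $x^{j^\ast}$, which is immediate from the fact that the particle domains of $x^1,x^2,x^3$ are respectively $\{n\geq 0\}$, $\{n\geq -\lfloor T\rho_2\rfloor\}$, and $\Z$, containing $R_1$, $R_1\cup R_2$, and $R_1\cup R_2\cup R_3$. In essence, the proposition is a refinement of \eqref{eqMin} obtained by splitting the infimum over $k\leq N$ according to the three density regions of the initial condition.
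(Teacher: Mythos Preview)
Your proof is correct and follows essentially the same approach as the paper. The paper's proof sketches two routes: one via the backwards path (if $N(0)$ lands in region $j$ then $x_N(t)=x^j_N(t)$), the other by splitting the infimum in \eqref{eqMin} over the three label ranges; your argument is a clean implementation of the first route, made into a two-sided inequality by combining with attractiveness, and your closing remark is exactly the paper's second route.
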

\begin{proof}
There are two ways of proving it. One is a slight extension of the one of Lemma~3.1 of~\cite{Fer18}. The idea is to see that if the index of the backwards path at time $0$ satisfies $N(0)\geq 0$, then $x_N(t)=x_N^1(t)$, while if \mbox{$-\lfloor \rho_2 T\rfloor \leq N(0)<0$}, then $x_N(t)=x_N^2(t)$, and finally if $N(0)<-\lfloor \rho_2 T\rfloor$, then $x_N(t)=x_N^3(t)$. Alternatively, we can apply the representation of Lemma~2.1 of~\cite{Sep98c}, see \eqref{eqMin}, and notice that
\begin{equation}
\begin{aligned}
x_N^1(t)&=\min_{N\geq k\geq 0}x^{{\rm step},x_{k}^{1}(0)}_{N-k+1}(t),\\
x_N^2(t)&=\min_{0\geq k\geq -\lfloor \rho_2 T\rfloor} x^{{\rm step}, x_{k}^{2}(0)}_{N-k+1}(t),\\
x_N^3(t)&=\min_{-\lfloor \rho_2 T\rfloor\geq k} x^{{\rm step},x_{k}^{3}(0)}_{N-k+1}(t).
\end{aligned}
\end{equation}
\end{proof}

\paragraph{II. Extension of the three problems into full-line problems.}
First we extend the models (1), (2) and (3) to have fixed density over all space. Thus define, for all $n\in\Z$,
\begin{equation}
\tilde x_n^1(0)=-\lfloor n/\rho_1\rfloor,\qquad
\tilde x_n^2(0)=-\lfloor n/\rho_2\rfloor,\qquad
\tilde x_n^3(0)=-\lfloor (n+n_0)/\rho_3\rfloor,
\end{equation}
where $n_0=(\rho_3-\rho_2)T$. The shift in $\tilde x_n^3(0)$ is  due to the fact that we need to keep $\tilde x_n^3(0)=x_n^3(0)$ for $n=-\lfloor \rho_2 T\rfloor$.

Let us state a result on the limiting distribution for the above system, which is a corollary of Proposition~\ref{PropCvgDistr}.
\begin{cor}\label{CorFlatResults}
Let us consider
\begin{equation}
N=\frac{\rho_1\rho_2 T}{\rho_3-\rho_1}+u T^{1/3},\quad t=\frac{T}{\rho_3-\rho_1}+\tau T^{1/3},\quad X=\frac{(1-\rho_1-\rho_2)T}{\rho_3-\rho_1}.
\end{equation}
Then
\begin{equation}
\lim_{T\to\infty} \Pb(\tilde x^k_N(t)\geq X-sT^{1/3}) = F_{\rm GOE}((s-\mu_k u+\nu_k\tau)/\sigma_k), \quad k=1,2,3,
\end{equation}
where
\begin{equation}
\sigma_k = \frac{1}{(\rho_3-\rho_1)^{1/3}}\frac{2^{2/3}\rho_k^{1/3}}{(1-\rho_k)^{2/3}},\quad \mu_k = 1/\rho_k,\quad \nu_k=1-\rho_k.
\end{equation}
\end{cor}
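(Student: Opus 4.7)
The plan is, for each $k\in\{1,2,3\}$, to reduce the event $\{\tilde x^k_N(t)\geq X-sT^{1/3}\}$ to the event appearing in Proposition~\ref{PropCvgDistr} for a density-$\rho_k$ flat TASEP at time $t$, and then read off the limit. For $k=1,2$, the initial data $\tilde x^k_n(0)=-\lfloor n/\rho_k\rfloor$ is literally the Proposition's initial condition. For $k=3$, $\tilde x^3$ is a spatial translate, by a constant $c_3=\Or(T)$, of the standard density-$\rho_3$ flat TASEP, and translation invariance of TASEP reduces the claim to Proposition~\ref{PropCvgDistr} up to an additive shift $c_3$ on the position.

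The matching is performed by choosing $\alpha$ in Proposition~\ref{PropCvgDistr} so that its index $\rho_k^2 t+\rho_k\alpha t$ agrees with $N$, namely $\alpha_k=N/(\rho_k t)-\rho_k$. A Taylor expansion using $N=\rho_1\rho_2 T/(\rho_3-\rho_1)+uT^{1/3}$ and $t=T/(\rho_3-\rho_1)+\tau T^{1/3}$ gives $\alpha_k=\alpha_k^{(0)}+\Or(T^{-2/3})$ with $\alpha_k^{(0)}:=\rho_1\rho_2/\rho_k-\rho_k$. Proposition~\ref{PropCvgDistr} then yields
\begin{equation*}
\Pb\bigl(\tilde x^k_N(t)\geq (1-2\rho_k-\alpha_k)t+c_k-\tilde s_k\,t^{1/3}\bigr)\to F_{\rm GOE}\bigl(\tilde s_k/\sigma(\rho_k)\bigr),
\end{equation*}
with $c_1=c_2=0$. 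Equating the threshold to $X-sT^{1/3}$ and using $t^{1/3}\sim(T/(\rho_3-\rho_1))^{1/3}$, an elementary computation shows that (i) the $\Or(T)$ terms cancel by hydrodynamic consistency (the particle of index $N$ sits at $X$ in the hydrodynamic limit for each of the three sub-systems); and (ii) the remaining $T^{1/3}$-order contributions recombine to give $\tilde s_k=(s-\mu_k u+\nu_k\tau)(\rho_3-\rho_1)^{1/3}+o(1)$. Since $\sigma_k=\sigma(\rho_k)/(\rho_3-\rho_1)^{1/3}$, one obtains $\tilde s_k/\sigma(\rho_k)\to(s-\mu_k u+\nu_k\tau)/\sigma_k$, proving the claim.

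The only technical point to address is that Proposition~\ref{PropCvgDistr} provides the limit for a fixed value of $\alpha$, while here $\alpha_k$ depends on $T$ through an $\Or(T^{-2/3})$ correction. This is handled either by appealing to the (standard) locally uniform convergence in $\alpha$, or by applying Proposition~\ref{PropCvgDistr} at the fixed value $\alpha=\alpha_k^{(0)}$ and absorbing the $\Or(T^{-2/3})$ discrepancy $\alpha_k-\alpha_k^{(0)}$ into $\tilde s_k$, using the exact linear dependence of $(1-2\rho_k-\alpha)t$ on $\alpha$. For $k=3$ no new difficulty arises beyond the bookkeeping of the shift $c_3$.
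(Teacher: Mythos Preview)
Your proposal is correct and is precisely the computation the paper intends: the paper states only that Corollary~\ref{CorFlatResults} ``is a corollary of Proposition~\ref{PropCvgDistr}'' and gives no further details, so you are supplying the obvious verification. Your observation that $(1-2\rho_k-\alpha_k)t=(1-\rho_k)t-N/\rho_k$ exactly is the clean way to handle the $T$-dependence of $\alpha_k$; no appeal to local uniformity is actually needed. One small remark on the case $k=3$: strictly speaking $\tilde x^3$ is a \emph{label} shift of the standard density-$\rho_3$ flat system, $\tilde x^3_n(t)=\hat x_{n+n_0}(t)$, which for the periodic initial data coincides (up to $\Or(1)$) with a spatial translate by $c_3=-n_0/\rho_3$; either viewpoint gives the same computation and the leading $T$-term indeed matches $X$.
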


\paragraph{III. From $\tilde x^k_N(t)$ to $x^k_N(t)$.}
Recall from \eqref{eqMin} that we have
\begin{equation}
\tilde x^1_{N}(t)=x^{{\rm step}, \tilde x^1_{N(0)}(0)}_{N-N(0)+1}(t),
\end{equation}
where $x^{{\rm step}, \tilde x^1_{N(0)}(0)}_{N-N(0)+1}(t)$ is the position at time $t$ of the $N-N(0)+1$th particle in step-initial condition starting at the position of the starting point of the backwards path, which is $\tilde x^1_{N(0)}(0)$.

Define $\alpha$ such that $N=\rho_1^2 t+\rho_1\alpha t$. As a consequence of the localization result, Theorem~\ref{thmLocalization}, if we take $t$ and $N$ as in Corollary~\ref{CorFlatResults}, then we get
\begin{equation}\label{eq3.11}
\Pb(|\tilde x^1_{N(0)}(0)-\tfrac{\rho_1-\rho_2}{\rho_3-\rho_1} T|\geq T^{2/3+\e})\geq 1-C e^{-c T^\e}.
\end{equation}
Similarly, for the other initial conditions, we have
\begin{equation}\label{eq3.12}
\begin{aligned}
\Pb(|\tilde x^2_{N(0)}(0)-\tfrac{\rho_2-\rho_1}{\rho_3-\rho_1}T)|\geq T^{2/3+\e})&\geq 1-C e^{-c T^\e},\\
\Pb(|\tilde x^3_{N(0)}(0)-\tfrac{2\rho_3-\rho_1-\rho_2}{\rho_3-\rho_1}T|\geq T^{2/3+\e})&\geq 1-C e^{-c T^\e},
\end{aligned}
\end{equation}
for some constants $C,c>0$.

As explained in more details in~\cite{Fer18}, by the construction of the backwards path $x_{N(u)}(u)$, $u\in[0,t]$, at any time we can move away particles to the right of it and move in the right direction particles to its left, without changing the distribution of $x_{N(t)}(t)$. We employed it by moving particles to the right as much as possible and thus generating a step-initial condition. However, if the particle that we moved to create step-initial conditions are moved by a smaller distance, the result is unchanged. Indeed, by monotonicity, the distribution of particle $N$ at time $t$ starting with this new configuration at time $u$, is between $x_{N(t)}(t)$ and $x_{N(u)}(u)+y_{N(u)}(u,t)$, which are however equal by \eqref{eq2.6}.

In particular, on the events that $\tilde x^1_{N(0)}(0)\leq 0$, $\tilde x^2_{N(0)}(0)\in [1,T]$ and $\tilde x^3_{N(0)}(0)>T$, we can modify the initial configurations to $x^1(0)$, $x^2(0)$ and $x^3(0)$ and still have the property $x^k_N(t)=\tilde x^k_N(t)$, $k=1,2,3$. By \eqref{eq3.11} and \eqref{eq3.12}, this happens with probability at least $1-3C e^{-c t^\e}$. Consequently the following corollary holds.
\begin{cor}\label{CorCvgOnePtDistr}
 Corollary~\ref{CorFlatResults} holds with $\tilde x^k_n(t)$ replaced by $x^k_n(t)$ for all $k=1,2,3$.
\end{cor}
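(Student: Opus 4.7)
The plan is to handle each $k\in\{1,2,3\}$ separately by producing a high-probability event $A_k$ on which $x^k_N(t)=\tilde x^k_N(t)$; Corollary~\ref{CorFlatResults} will then transfer to $x^k_N(t)$ immediately.

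Fix $k$ and let $A_k$ be the event that the starting point $\tilde x^k_{N(0)}(0)$ of the backwards path associated with $\tilde x^k_N(t)$ lies in the region where $\tilde x^k(0)$ and $x^k(0)$ agree: $\tilde x^1_{N(0)}(0)\leq 0$ for $k=1$, $\tilde x^2_{N(0)}(0)\in[1,T]$ for $k=2$, and $\tilde x^3_{N(0)}(0)>T$ for $k=3$. The localization bounds \eqref{eq3.11} and \eqref{eq3.12}, which are direct applications of Theorem~\ref{thmLocalization} to the fixed-density systems $\tilde x^k$, confine $\tilde x^k_{N(0)}(0)$ to a window of width $T^{2/3+\e}$ around the respective centers $\tfrac{\rho_1-\rho_2}{\rho_3-\rho_1}T$, $\tfrac{\rho_2-\rho_1}{\rho_3-\rho_1}T$, and $\tfrac{2\rho_3-\rho_1-\rho_2}{\rho_3-\rho_1}T$. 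Because the strict ordering $0<\rho_1<\rho_2<\rho_3<1$ places each center at order-$T$ distance from the cut-offs $0$ and $T$, one has $\Pb(A_k^c)\leq Ce^{-cT^\e}$ for all $T$ sufficiently large.

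The key identity $x^k_N(t)=\tilde x^k_N(t)$ on $A_k$ is precisely the content of the monotonicity argument recalled in the paragraph preceding the corollary: modifying the initial configuration on the side of $\tilde x^k_{N(0)}(0)$ that the backwards path never visits leaves $\tilde x^k_N(t)$ unchanged, and the disagreement between $\tilde x^k(0)$ and $x^k(0)$ lies precisely on that side. Equivalently, the Sepp\"al\"ainen-type representation \eqref{eqMin} shows that on $A_k$ the infimum defining $\tilde x^k_N(t)$ is attained at an index at which $\tilde x^k(0)$ and $x^k(0)$ coincide. Hence, for any fixed $s\in\R$,
\begin{equation}
\bigl|\Pb(x^k_N(t)\geq X-sT^{1/3})-\Pb(\tilde x^k_N(t)\geq X-sT^{1/3})\bigr|\leq 2\Pb(A_k^c)=O(e^{-cT^\e}),
\end{equation}
so the limit of the left probability equals that of the right, which by Corollary~\ref{CorFlatResults} is $F_{\rm GOE}((s-\mu_k u+\nu_k\tau)/\sigma_k)$.

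I do not anticipate a serious obstacle, since the two substantive inputs (Theorem~\ref{thmLocalization} and Proposition~\ref{PropCvgDistr}) are already in hand. The only delicate point is verifying that the three backwards-path target centers genuinely lie strictly inside the admissible regions at order-$T$ distance from the boundaries $0$ and $T$; this is an elementary consequence of the strict inequalities $\rho_1<\rho_2<\rho_3$ and is exactly why the $T^{2/3+\e}$-localization suffices.
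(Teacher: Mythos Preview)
Your proposal is correct and follows essentially the same approach as the paper: define the high-probability events $A_k$ via the localization bounds \eqref{eq3.11}--\eqref{eq3.12}, use the backwards-path/monotonicity argument (or equivalently the Sepp\"al\"ainen representation \eqref{eqMin}) to get $x^k_N(t)=\tilde x^k_N(t)$ on $A_k$, and transfer Corollary~\ref{CorFlatResults}. The only cosmetic point is that the factor $2$ in front of $\Pb(A_k^c)$ is unnecessary, since on $A_k$ the two events coincide; this does not affect the conclusion.
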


\paragraph{IV. Use of slow-decorrelation.}
The reduction of $\tilde x^k_N(t)$ to $x^k_N(t)$ implies also that the slow-decorrelation result, Theorem~\ref{thmSlowDec}, holds here as well. To be explicit we have
\begin{cor}\label{corSlowDec}
Let $N$ and $t$ as in Corollary~\ref{CorFlatResults}. For all $\delta>0$ and some $0<\nu<1$,
\begin{equation}\label{eqSlowDecB}
\lim_{T\to\infty} \Pb(|x^k_{N}(t)-x^k_{N-\rho_k^2 T^\nu}(t-T^\nu)-(1-2\rho_k)T^\nu|\geq \delta T^{1/3})=0,
\end{equation}
for $k=1,2,3$.
\end{cor}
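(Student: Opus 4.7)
The plan is to apply Theorem~\ref{thmSlowDec} to each of the three full-density extensions $\tilde x^k$, $k=1,2,3$, and then transfer the result to the original processes $x^k$ using the same reduction argument that turned Corollary~\ref{CorFlatResults} into Corollary~\ref{CorCvgOnePtDistr}. The upshot is that the slow-decorrelation estimate for $\tilde x^k$ survives verbatim after the reduction, because the coupling $\tilde x^k = x^k$ holds at both times with exponentially high probability.

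The first step is essentially immediate. The process $\tilde x^k$ is TASEP on the full line with constant density $\rho_k$, so by translation invariance in the particle label one can shift the indices so that particle $N$ at time $t$ and particle $N-\rho_k^2 T^\nu$ at time $t-T^\nu$ play the roles of $\rho_k^2 s$ and $\rho_k^2(s-s^{\nu'})$ in Theorem~\ref{thmSlowDec} (with $s \sim t$ and $s^{\nu'} = T^\nu$, so $\nu' \to \nu \in (0,1)$). The index shift $\rho_k^2 T^\nu$ is exactly what keeps the two particles on the same characteristic of slope $1-2\rho_k$, and the drift $(1-2\rho_k)T^\nu$ is the characteristic displacement over time $T^\nu$. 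Theorem~\ref{thmSlowDec} therefore yields, for each $k$,
\begin{equation*}
\lim_{T\to\infty}\Pb\bigl(|\tilde x^k_N(t)-\tilde x^k_{N-\rho_k^2 T^\nu}(t-T^\nu)-(1-2\rho_k)T^\nu|\geq \delta T^{1/3}\bigr)=0.
\end{equation*}

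The second step is to replace each $\tilde x^k$ by $x^k$. At time $t$, the argument used to obtain Corollary~\ref{CorCvgOnePtDistr} applies verbatim: by \eqref{eq3.11}--\eqref{eq3.12} the backwards path from $(\tilde x^k_N(t),t)$ lands in the correct region of the initial configuration (left of $0$ for $k=1$, between $0$ and $T$ for $k=2$, right of $T$ for $k=3$) with probability at least $1-Ce^{-cT^\e}$, and on this event Remark~\ref{remarkLoc} gives $\tilde x^k_N(t)=x^k_N(t)$. At time $t-T^\nu$ with index $N-\rho_k^2 T^\nu$ the same argument applies: since this particle still lies on the same density-$\rho_k$ characteristic emanating from the same spatial region, Theorem~\ref{thmLocalization} together with the concentration estimates of the Part~III analog of \eqref{eq3.11}--\eqref{eq3.12} (now with terminal time $t-T^\nu$, still $\sim T/(\rho_3-\rho_1)$ since $\nu<1$) again forces the backwards-path endpoint at time $0$ into the same region, giving $\tilde x^k_{N-\rho_k^2 T^\nu}(t-T^\nu)=x^k_{N-\rho_k^2 T^\nu}(t-T^\nu)$ off an event of probability $Ce^{-cT^\e}$. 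A union bound over the two times and the three values of $k$ then completes the proof.

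The only step requiring real care, and hence the main obstacle, is the reduction at time $t-T^\nu$: one must verify that the cylinder $\cal C$ of Theorem~\ref{thmLocalization}, when re-centered on the characteristic through the terminal point $(\tilde x^k_{N-\rho_k^2 T^\nu}(t-T^\nu),t-T^\nu)$, still satisfies the same exponential bounds, and that the spatial window of width $T^{2/3+\e}$ is large enough to cover the same endpoint region as in \eqref{eq3.11}--\eqref{eq3.12}. Since $t-T^\nu$ differs from $t$ by a lower-order term in $T$ and the characteristic of $\tilde x^k$ is a straight line, this amounts to rerunning the proof of Theorem~\ref{thmLocalization} with $t\mapsto t-T^\nu$ and does not produce any new difficulty.
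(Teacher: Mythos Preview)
Your argument is correct. Both your route and the paper's ultimately rest on the same two ingredients: (i) the localization/reduction that identifies $\tilde x^k$ with $x^k$ not only at time $t$ but also at time $t-T^\nu$, and (ii) the slow-decorrelation mechanism of Theorem~\ref{thmSlowDec}. The only difference is the order in which they are applied.

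You first invoke Theorem~\ref{thmSlowDec} as a black box for the full-line flat process $\tilde x^k$ and then transfer the conclusion to $x^k$ via the coupling $\tilde x^k_N(t)=x^k_N(t)$ and $\tilde x^k_{N-\rho_k^2 T^\nu}(t-T^\nu)=x^k_{N-\rho_k^2 T^\nu}(t-T^\nu)$, each holding off an event of probability $Ce^{-cT^\e}$. The paper instead reruns the \emph{proof} of Theorem~\ref{thmSlowDec} directly for $x^k$: it first establishes the three distributional limits \eqref{eq2.42} with $x^k$ in place of $x$ (the two GOE limits coming from Corollary~\ref{CorCvgOnePtDistr} and its analogue at time $t-T^\nu$, which requires exactly the same reduction you describe), and then feeds these into the inequality \eqref{eq1.5} applied to the TASEP $x^k$ itself. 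Your packaging is slightly cleaner in that it uses Theorem~\ref{thmSlowDec} off the shelf; the paper's packaging makes more transparent that the comparison inequality \eqref{eq1.5} is being used for the process $x^k$ directly, not only for $\tilde x^k$. Either way the substance is the same, and your identification of the reduction at the earlier time $t-T^\nu$ as the only point needing care is exactly right.
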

\begin{proof}
As input one needs the convergence in distribution of $x^k_N(t)$ and $x^k_{N_k}(T-T^\nu)$. In the same way as for Corollary~\ref{CorCvgOnePtDistr} for $x_N^k(t)$, one gets for instance
\begin{equation}
\begin{aligned}
&\lim_{T\to\infty}\frac{x^1_{N-\rho_1^2 T^\nu}(t-T^\nu)-\left(X+\nu_1\tau T^{1/3}-\mu_1 u T^{1/3}+(1-2\rho_1)T^\nu\right)}{-T^{1/3}\sigma_1}\stackrel{(d)}{=}F_{\rm GOE},\\
&\lim_{T\to\infty}\frac{x^1_{N}(t)-\left(X+\nu_1\tau T^{1/3}-\mu_1u T^{1/3}\right)}{-T^{1/3}\sigma_1}\stackrel{(d)}{=}F_{\rm GOE},\\
&\lim_{T\to\infty}\frac{x^{\rm step}_{\rho_1^2 T^\nu}(T^\nu)-(1-2\rho_1)T^\nu}{-2^{-1/3}T^{\nu/3}}\stackrel{(d)}{=}F_{\rm GUE},
\end{aligned}
\end{equation}
and similarly for $x^2$ and $x^3$.
\end{proof}

By Proposition~\ref{propMinimum} we have
\begin{equation}\label{eq3.14}
\lim_{T\to\infty} \Pb(x_N(t)\geq X-s T^{1/3}) =\lim_{T\to\infty}\Pb(\min\{x^1_N(t),x_N^2(t),x_N^3(t)\}\geq X-s T^{1/3}).
\end{equation}
Denote by $z^k=x^k_{N-\rho_k^2 T^\nu}(t-T^\nu)+(1-2\rho_k)T^\nu$ for $k=1,2,3$. Then, the slow decorrelation result, Corollary~\ref{corSlowDec}, gives us
\begin{equation}\label{eq3.15}
\eqref{eq3.14} = \lim_{T\to\infty}\Pb(\min\{z^1,z^2,z^3\}\geq X-s T^{1/3}).
\end{equation}

\paragraph{V. Asymptotic independence.}
The remaining step to complete the proof of Theorem~\ref{ThmMain} is to prove that $z^1,z^2,z^3$ are asymptotically independent, see also Remark~\ref{remarkLoc}. For that purpose, define the cylinders
\begin{equation}
{\cal C}_k=\{(x,u)\in \R\times [0,t-T^\nu]\, |\, |x+(1-2\rho_k)(t-u)-X|< T^{2/3+\e}\},
\end{equation}
for $k=1,2,3$. By choosing $2/3+\e<\nu<1$, we clearly have ${\cal C}_1\cap {\cal C}_2 \cap {\cal C}_3=\emptyset$ for all $T$ large enough.

Let $\bar x^k_{N_k(u)}(u)$ be the backwards process starting at time $t-T^\nu$ with particle number $N_k(t-T^\nu)=N-\rho_k^2 T^\nu$. Then, rewriting the result of Theorem~\ref{thmLocalization} for the three densities and shifting appropriately, we get
\begin{cor}\label{corLocalization}
Let $0<\e<1/3$ and $2/3+\e<\nu<1$. Then for all $T$ large enough,
\begin{equation}
\Pb((\bar x^k_{N_k(u)}(u),u)\in {\cal C}_k\textrm{ for all }u\in[0,t-T^\nu])\geq 1-C e^{-c T^\e},
\end{equation}
for $k=1,2,3$.
\end{cor}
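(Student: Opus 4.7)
The plan is to apply Theorem~\ref{thmLocalization}, after a suitable space-time shift, to each of the three full-density auxiliary systems $\tilde x^k$ of paragraph~II, and then to transfer the resulting localization to the backwards paths $\bar x^k$ of the restricted systems $x^k$ via the monotone-coupling reduction used in paragraph~III.

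First I would introduce, for each $k\in\{1,2,3\}$, the backwards path $\tilde{\bar x}^k_{N_k(u)}(u)$ in the density-$\rho_k$ system $\tilde x^k$, started at time $t-T^\nu$ from particle label $N_k = N - \rho_k^2 T^\nu$. Its deterministic endpoint at $u = 0$ is
\begin{equation}
X - (1-2\rho_k)t = \frac{2\rho_k - \rho_1 - \rho_2}{\rho_3 - \rho_1}\,T + O(T^{1/3}),
\end{equation}
which, by $\rho_1<\rho_2<\rho_3$, lies strictly inside $(-\infty,0)$, $(0,T)$, and $(T,\infty)$ for $k=1,2,3$ respectively. Exactly as in paragraph~III, whenever the backwards path actually reaches time $0$ inside the region where $x^k(0)$ and $\tilde x^k(0)$ coincide, the two backwards paths themselves coincide, i.e.\ $\bar x^k_{N_k(u)}(u) = \tilde{\bar x}^k_{N_k(u)}(u)$ for all $u\in[0,t-T^\nu]$.

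Next, I would apply Theorem~\ref{thmLocalization} to each $\tilde x^k$ separately. Viewing the time interval $[0,t-T^\nu]$, of length $t'\sim T/(\rho_3-\rho_1)$, as the analog of $[0,t]$ in Section~\ref{sectFixedDensity}, and writing $N_k = \rho_k^2 t' + \rho_k\alpha_k t'$ for a fixed $\alpha_k$, we are in the setting of that section up to a shift. The proof of Theorem~\ref{thmLocalization} uses only Proposition~\ref{PropPtLineBounds} (already stated for arbitrary $\alpha$) and Proposition~\ref{propBoundStep} (whose ingredients Lemmas~\ref{lem2.4}, \ref{lem2.5} generalize verbatim after re-centring), so only the centre of the localization cylinder changes: it becomes the characteristic of slope $1-2\rho_k$ passing through the typical endpoint $\bigl(X-(1-2\rho_k)T^\nu,\,t-T^\nu\bigr)$, which is precisely the line $x = X-(1-2\rho_k)(t-u)$. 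Since $(t')^{2/3+\e} = O(T^{2/3+\e})$, this cylinder fits inside ${\cal C}_k$ after, if necessary, a negligible decrease of $\e$. Hence $\tilde{\bar x}^k_{N_k(u)}(u)\in {\cal C}_k$ for all $u\in[0,t-T^\nu]$ with probability $\geq 1-Ce^{-cT^\e}$; combined with the coupling of the previous step, this gives the claim.

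The main obstacle I expect is bookkeeping rather than probabilistic: one must verify that Theorem~\ref{thmLocalization}, as stated for the particular label $N=\rho^2 t$ (i.e.\ $\alpha=0$) and with the characteristic passing through the origin, carries over to general $\alpha_k$ and a shifted characteristic. Since Proposition~\ref{PropPtLineBounds} is uniform on bounded sets of $\alpha$, and the Union-bound-plus-slow-decorrelation argument in the proof of Theorem~\ref{thmLocalization} only cares about the distance from the path to \emph{its own} characteristic line, no new probabilistic estimate is required; the extension is obtained by substituting the shifted characteristic throughout.
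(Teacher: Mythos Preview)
Your proposal is correct and matches the paper's intent. The paper itself gives no proof beyond the one-line remark ``rewriting the result of Theorem~\ref{thmLocalization} for the three densities and shifting appropriately, we get'', so what you have written is the natural fleshing-out of that sentence: apply Theorem~\ref{thmLocalization} to each full-line constant-density system $\tilde x^k$ after translating the characteristic to pass through the appropriate space-time point, and then invoke the reduction of paragraph~III to pass back to $x^k$.

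One minor comment: your assertion that ``the two backwards paths themselves coincide'' (i.e.\ the $x^k$-path equals the $\tilde x^k$-path pointwise on the localization event) is a bit stronger than what you actually need, and is not quite immediate from paragraph~III as written. What paragraph~III gives directly is that on the event $\{\tilde x^k_{N_k(0)}(0)\textrm{ lies in the correct region}\}$ one has the \emph{equality of positions} $x^k_{N_k}(t-T^\nu)=\tilde x^k_{N_k}(t-T^\nu)$; combined with Remark~\ref{remarkLoc} applied to the $\tilde x^k$-path, this already yields that $z^k$ depends only on the randomness in ${\cal C}_k$, which is all that paragraph~V requires. So you can either (i) state the corollary for the $\tilde x^k$-backwards path and bypass the coincidence claim entirely, or (ii) keep the coincidence claim but note that it follows from the sandwich argument in paragraph~III applied at each intermediate time $u$. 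Either way the conclusion and the route are the ones the paper has in mind.
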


By the construction of the backwards path, on the event that the path stays inside the cylinder ${\cal C}_k$, $z^k=\bar x^k_{N_k(t-T^\nu)}(T-T^\nu)$ does not depend on the randomness outside ${\cal C}_k$, for $k=1,2,3$. For $T$ large enough, the three cylinders are disjoints. Thus, on the event that all three backward paths stay in the respective cylinder, the random variables $z^1,z^2,z^3$ are independent. Since by Corollary~\ref{corLocalization} this event has probability going to one as $T\to\infty$, it means that $z^1,z^2,z^3$ are asymptotically independent and thus
\begin{equation}
(\ref{eq3.15})=\prod_{k=1}^3  \lim_{T\to\infty}\Pb(z^k\geq X-s T^{1/3}).
\end{equation}
The r.h.s.~is given in Corollary~\ref{CorCvgOnePtDistr}, thus ending the proof of Theorem~\ref{ThmMain}.

\appendix

\section{Some bounds}

\begin{lem}\label{lemBound2}
Let $\nu \in (0,1)$. There exists a $t_0\in (0,\infty)$ such that for all $t\geq t_0$,
\begin{equation}\label{BoundsPtPt}
\begin{aligned}
&\Pb(x^{\rm step}_{\nu t}(t)\geq (1-2\sqrt{\nu})t-s t^{1/3})\leq C_1\, e^{-c_1 (-s)^{3/2}},\quad s\leq 0,\\
&\Pb(x^{\rm step}_{\nu t}(t)\leq (1-2\sqrt{\nu})t -s t^{1/3})\leq C_2 \,e^{-c_2 s},\quad s\geq 0,
\end{aligned}
\end{equation}
where the constants $C_i,c_i$ are positive and independent of $s$. Further, for any given $\e>0$, the constants in the bounds for step initial conditions can be chosen independent of $\nu\in [\e,1-\e]$.
\end{lem}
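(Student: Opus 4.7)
The strategy is to reduce both estimates to standard pre-asymptotic one-point tail bounds for $x^{\rm step}_n(t)$, which by Johansson's identity~\cite{Jo00b} is conjugate to the largest eigenvalue of a Laguerre unitary ensemble (equivalently, to an exponential last-passage percolation time). Under this identity, $\Pb(x^{\rm step}_n(t)\ge X)$ equals the probability that a certain LPP time is $\le t$, and its distribution admits an explicit Fredholm determinant representation with a discrete Bessel-type kernel. A steepest-descent analysis of the resulting double contour integrals (carried out in~\cite{Jo00b}) yields convergence to $F_{\rm GUE}$ together with pre-asymptotic tail bounds.

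Two observations simplify the task. First, the bounds requested are strictly weaker than the true GUE Tracy--Widom tails: with the sign convention used in Theorem~\ref{thmSlowDec}, the right-deviation event $\{x^{\rm step}_{\nu t}(t)\ge(1-2\sqrt{\nu})t+|s|t^{1/3}\}$ corresponds to the GUE lower tail which decays as $e^{-c|s|^3}$, a bound that immediately implies the required $e^{-c_1|s|^{3/2}}$. The left-deviation event corresponds to the GUE upper tail decaying as $e^{-c s^{3/2}}$, which likewise implies the weaker $e^{-c_2 s}$ of the lemma. Second, for $\nu \in [\e, 1-\e]$ the scaling quantities $\sqrt{\nu}$, $1-\sqrt{\nu}$, and the TW fluctuation scale are continuous in $\nu$ and bounded away from the degeneracies at $\nu=0,1$, so the saddle points and the steepest-descent contours remain in compact regions of the complex plane, yielding uniform constants.

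Combining these, for the upper tail I would dominate $\Pb(x^{\rm step}_{\nu t}(t)\ge X)$ by the first trace in the Fredholm expansion of the corresponding LPP distribution, which after steepest-descent gives a term of order $e^{-c|s|^{3/2}}$; higher-order terms are controlled by Hadamard's inequality applied to the shifted kernel. For the lower tail the much crude exponential bound $e^{-c_2 s}$ follows either from the first trace in the Fredholm expansion of the complementary event (here one does not even need the full steepest-descent decay, merely exponential control of the kernel away from the saddle), or alternatively from a direct exponential Markov estimate on the number of particles past the threshold combined with monotonicity in $\nu$.

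The main obstacle is not the tail rate itself (both lemma exponents are weaker than the true TW rates) but ensuring that the constants $C_i,c_i$ are uniform simultaneously in $t\ge t_0$, in $s$, and in $\nu\in[\e,1-\e]$. This amounts to tracking the $\nu$-dependence of the critical points of the phase function and the associated deformed contours through the steepest-descent argument, which is standard bookkeeping in the TASEP/LPP literature and can be invoked directly from~\cite{Jo00b}.
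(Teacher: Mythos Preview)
Your overall framework is the same as the paper's: both tails come from the Fredholm determinant representation for $x^{\rm step}_n(t)$ and steepest-descent control of the kernel, with the first estimate taken from~\cite{BFP12,Wid02} and the second from direct exponential kernel bounds. The paper does not give a self-contained proof either; it simply cites these references and remarks that the stated exponents are not optimal (cf.~\cite{LR09}).

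One point in your write-up is garbled and worth correcting. For the first inequality (the event $\{x^{\rm step}_{\nu t}(t)\ge(1-2\sqrt{\nu})t+|s|t^{1/3}\}$, i.e.\ the Tracy--Widom \emph{lower} tail), you write that you would ``dominate by the first trace in the Fredholm expansion''. That is the technique for the \emph{other} tail. Here $\Pb(x^{\rm step}_{\nu t}(t)\ge X)$ itself equals a Fredholm determinant $\det(I-K_s)$, and for $s\ll 0$ the trace $\Tr K_s$ is large, not small; bounding by the first term of the Fredholm series gives nothing useful. What actually works, and what \cite{BFP12,Wid02} do, is the operator inequality $\det(I-K)\le e^{-\Tr K}$ valid for $0\le K\le 1$; steepest descent then gives $\Tr K_s\gtrsim |s|^{3/2}$, yielding the stated $e^{-c_1|s|^{3/2}}$ (suboptimal, as you correctly note, compared to the true $e^{-c|s|^3}$). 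Hadamard's inequality plays no role here. Conversely, your description of the \emph{second} inequality is right: $1-\det(I-K_s)$ is bounded by the Fredholm series with Hadamard control of the higher terms, and exponential decay of the kernel on $(s,\infty)$ gives the claimed (again suboptimal) $e^{-c_2 s}$. Your remarks about uniformity in $\nu\in[\e,1-\e]$ are fine.
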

The first estimate in (\ref{BoundsPtPt}) was obtained in~\cite{BFP12} in terms of TASEP height function. The idea is to bound the Fredholm determinant which gives the distribution function of $x^{\rm step}_{t/4}(t)$ by the exponential of the trace of the kernel, see Section~4 of~\cite{BFP12}. The method was used before by Widom in~\cite{Wid02}. The other two estimates in (\ref{BoundsPtPt}) follow directly from the exponential estimates on the correlation kernel for step initial condition. Although we do not need it here, let us mention that the estimates can be improved to optimal decay power~\cite{LR09}.


\end{document}